\documentclass[aps,prl,groupedaddress,superscriptaddress,twocolumn,notitlepage,bibnotes]{revtex4-1}

\usepackage{amsthm,amsmath,amssymb,amsbsy}

 \usepackage{hyperref}
\usepackage{url}

\usepackage{graphicx}
\usepackage{bm,bbm}%
\usepackage{braket}
\usepackage{enumerate}
\usepackage{booktabs,multirow}
\usepackage{mathtools,bbding}
\usepackage{microtype}
\usepackage{mathrsfs}

\usepackage{txfonts,newtxmath}

\usepackage[most,breakable]{tcolorbox}

\newcommand{\nc}{\newcommand}
\newcommand{\rnc}{\renewcommand}

\makeatletter
\newcommand*\rel@kern[1]{\kern#1\dimexpr\macc@kerna}
\newcommand*\widebar[1]{%
  \begingroup
  \def\mathaccent##1##2{%
    \rel@kern{0.8}%
    \overline{\rel@kern{-0.8}\macc@nucleus\rel@kern{0.2}}%
    \rel@kern{-0.2}%
  }%
  \macc@depth\@ne
  \let\math@bgroup\@empty \let\math@egroup\macc@set@skewchar
  \mathsurround\z@ \frozen@everymath{\mathgroup\macc@group\relax}%
  \macc@set@skewchar\relax
  \let\mathaccentV\macc@nested@a
  \macc@nested@a\relax111{#1}%
  \endgroup
}

\rnc{\thesection}{\arabic{section}}
\rnc{\thesubsection}{\thesection.\arabic{subsection}}
\rnc{\thesubsubsection}{\thesubsection.\arabic{subsubsection}}

\usepackage{etoolbox}
\hypersetup{linktoc=all}


\newtheorem{definition}{Definition}
\newtheorem{proposition}[definition]{Proposition}
\newtheorem{lemma}[definition]{Lemma}

\newtheorem{theorem}[definition]{Theorem}
\newtheorem{corollary}[definition]{Corollary}

\theoremstyle{definition}
\newtheorem*{rem}{Remark}

\DeclareMathOperator{\Tr}{Tr}

\DeclareMathOperator{\CN}{CN}

\DeclareMathOperator{\rank}{rank}
\DeclareMathOperator{\sspan}{span}


\nc{\tcr}[1]{{\color{red} #1}}

\nc{\psic}{\psi^{c}}
\nc{\two}[1]{\underline{2^{d-#1}}}
\rnc{\H}{\mathcal{H}}
\nc{\Hanc}{\mathcal{H}_{\text{anc}}}
\nc{\psianc}{\psi_{\text{anc}}}
\nc{\lampow}{\lambda^{1/d}}

\nc{\norm}[2]{\left\lVert#1\right\rVert_{\,#2}}
\nc{\proj}[1]{\ket{#1}\!\!\bra{#1}}
\nc{\pro}[1]{#1 #1^\dagger}
\nc{\lnorm}[2]{\left\lVert#1\right\rVert_{\ell_{#2}}}

\nc{\RR}{{{\mathbb R}}}
\nc{\CC}{{{\mathbb C}}}
\nc{\FF}{{{\mathbb F}}}
\nc{\NN}{{{\mathbb N}}}
\nc{\ZZ}{{{\mathbb Z}}}

\nc{\MIO}{{\text{\rm MIO}}}
\nc{\DIO}{{\text{\rm DIO}}}
\nc{\IO}{{\text{\rm IO}}}
\nc{\SIO}{{\text{\rm SIO}}}
\nc{\PIO}{{\text{\rm PIO}}}

\nc{\SEP}{{\text{SEP}}}
\nc{\NS}{{\text{NS}}}
\nc{\LOCC}{{\text{LOCC}}}
\nc{\PPT}{{\text{PPT}}}
\nc{\EXT}{{\text{EXT}}}
\nc{\OLOCC}{{\text{1-LOCC}}}
\nc{\SEPP}{{\text{SEPP}}}

\nc{\MC}{{\text{\rm MC}}}

\nc{\cE}{\mathscr{E}}

\rnc{\*}{\textup{*}}
\nc{\<}{\left\langle}
\rnc{\>}{\right\rangle}

\rnc{\bar}{\;\rule{0pt}{9.5pt}\right|\;}

\nc{\lset}{\left\{\left.}
\nc{\rset}{\right\}}
\nc{\lsetr}{\left\{\,}
\nc{\rsetr}{\right.\right\}}
\nc{\barr}{\,\rule{0pt}{9.5pt}\left|\;}
\nc{\ketbra}[2]{\ket{#1}\!\!\bra{#2}}

\newcommand{\ketbraa}[2]{\ket{#1}\!\!\bra{#2}}

\nc{\logfloor}[1]{\left\lfloor {#1} \right\rfloor_{\log}}

\newcommand{\texteq}[1]{\stackrel{\mathclap{\mbox{\text{\scriptsize #1}}}}{=}}
\newcommand{\textleq}[1]{\stackrel{\mathclap{\mbox{\text{\scriptsize #1}}}}{\leq}}

\newcommand{\id}{\mathbbm{1}}

\newcommand{\N}{\mathcal{N}}

\renewcommand{\O}{\mathcal{O}}
\newcommand{\I}{\mathcal{I}}
\nc{\MM}{\widetilde{\M}}
\nc{\Ml}{\M^{\leq}}

\nc{\mleq}{\preceq}
\nc{\mgeq}{\succeq}

\nc{\ox}{\otimes}

\nc{\wt}{\widetilde}

\nc{\SDP}{\text{\rm SDP}}

\nc{\cc}{{\circ\circ}}
\nc{\mnorm}[1]{\norm{#1}{[m]}}

\nc{\F}{\mathcal{F}}
\nc{\M}{\mathcal{M}}

\let\oldproofname\proofname
\rnc{\proofname}{\rm\bf{\oldproofname}}
\rnc{\qedsymbol}{{\color{gray!50!black}\rule{0.6em}{0.6em}}}

\newcommand{\bb}{\begin{equation}}
\newcommand{\bbb}{\begin{equation*}}
\newcommand{\ee}{\end{equation}}
\newcommand{\eee}{\end{equation*}}

\nc{\note}[1]{{\color{blue!90!black} #1}}

\newcommand{\notts}{\affiliation{School of Mathematical Sciences and Centre for the Mathematics and Theoretical Physics of Quantum Non-Equilibrium Systems, University of Nottingham, University Park, Nottingham NG7 2RD, United Kingdom}}

\begin{document}
\title{Generic bound coherence under strictly incoherent operations}

\author{Ludovico Lami}
\email{ludovico.lami@gmail.com}
\notts

\author{Bartosz Regula}
\email{bartosz.regula@gmail.com}
\notts

\author{Gerardo Adesso}
\email{gerardo.adesso@nottingham.ac.uk}
\notts

\begin{abstract}
We compute analytically the maximal rates of distillation of quantum coherence under strictly incoherent operations (SIO) and physically incoherent operations (PIO), showing that they coincide for all states, and providing a complete description of the phenomenon of bound coherence. In particular, we establish a simple, analytically computable necessary and sufficient criterion for the asymptotic distillability under SIO and PIO. We use this result to show that almost every quantum state is undistillable --- only pure states as well as states whose density matrix contains a rank-one submatrix allow for coherence distillation under SIO or PIO, while every other quantum state exhibits bound coherence. This demonstrates fundamental operational limitations of SIO and PIO in the resource theory of quantum coherence. We show that the fidelity of distillation of a single bit of coherence under SIO can be efficiently computed as a semidefinite program, and investigate the generalization of this result to provide an understanding of asymptotically achievable distillation fidelity.
\end{abstract}

\maketitle


\textbf{\em Introduction}.---%
The resource theory of quantum coherence~\cite{aberg_2006,baumgratz_2014,winter_2016,streltsov_2017} has found extensive use in the characterization of a signature intrinsic feature of quantum mechanics --- superposition --- and our ability to manipulate it efficiently within a resource-theoretic framework~\cite{horodecki_2012,delrio_2015,coecke_2016,chitambar_2018-1}. Typically, the properties of a resource are investigated under a suitable set of allowed free operations, reflecting the constraints placed on the manipulation of the given resource~\cite{horodecki_2012,chitambar_2018-1}. In spite of the fact that the resource theory of coherence has found use in a variety of practical settings~\cite{streltsov_2017}, no physically compelling set of assumptions has yet emerged which could single out a unique class of free operations under which the operational features of coherence should be investigated, mirroring the fundamental role of local operations and classical communication in the resource theory of entanglement~\cite{horodecki_2009}.
This has motivated the definition and characterization of a multitude of possible sets of free operations, and sparked efforts to compare their operational power~\cite{du_2015, yuan_2015, yadin_2016, chitambar_2016, marvian_2016, chitambar_2016-1, vicente_2017, liu_2017, streltsov_2017-2, zhao_2018, regula_2017, chitambar_2018, egloff_2018, fang_2018, theurer_2018, regula_2018-1, zhao_2018-1}.
However, many definitions of free operations stemming from meaningful physical considerations, such as  physically incoherent operations (PIO)~\cite{chitambar_2016},  translationally-covariant incoherent operations~\cite{marvian_2016}, or  genuinely incoherent operations~\cite{vicente_2017}, were found to be too limited in their operational capabilities, suggesting that any useful resource theory of coherence would require a larger set of maps. On the other hand, strictly larger sets of maps such as  maximally incoherent operations (MIO)~\cite{aberg_2006},  incoherent operations (IO)~\cite{baumgratz_2014}, or  dephasing-covariant incoherent operations (DIO)~\cite{chitambar_2016,marvian_2016}, while operationally powerful, might be considered as too permissive and lacking a physically implementable form.


The class of \textit{strictly incoherent operations} (SIO)~\cite{winter_2016,yadin_2016} appeared to be a promising candidate for a natural class of operations satisfying desirable resource-theoretic criteria while at the same time being motivated on physical grounds and experimentally implementable, causing it to find widespread use in the resource theory of coherence~\cite{winter_2016, yadin_2016, chitambar_2016-1, streltsov_2017-2, biswas_2017}. SIO is an easy to characterize and seemingly powerful choice of free operations, allowing in particular for a ``golden unit'' of coherence represented by the maximally coherent state $\ket{\Psi_m}$, which can be transformed into any other state using SIO~\cite{baumgratz_2014}. Although strictly smaller than the sets IO and DIO, SIO includes many more transformations than PIO, and its operational capabilities did not appear to be too limited --- for instance, SIO have exactly the same power as IO as far as pure-to-pure state transformations are concerned~\cite{du_2015-1,chitambar_2016-1}, as well as in the context of coherence dilution~\cite{winter_2016,zhao_2018}; they match the power of DIO in probabilistic distillation from pure states~\cite{fang_2018}; and even the largest class of free operations, MIO, cannot perform better than SIO in one-shot distillation from pure states~\cite{regula_2017}, in assisted coherence distillation~\cite{regula_2018-1}, and in all single-qubit state transformations~\cite{chitambar_2016-1}. On the other hand, there do exist tasks in which the limitations of SIO become apparent --- in particular, unlike the larger sets IO, DIO, and MIO, the class SIO has recently been found to exhibit {\em bound coherence}~\cite{zhao_2018-1}, i.e., there are coherent states from which no coherence can be distilled by such operations. While the same phenomenon was known for PIO, it was arguably unexpected for SIO. It is, however, not known how common this property is among all quantum states, nor to what extent it limits the operational power of SIO beyond specific examples --- as is known from entanglement theory~\cite{horodecki_2009}, the mere existence of undistillable states does not inhibit a class of operations from being useful in manipulating a resource.
Indeed, a complete description of coherence distillation under SIO has been a long-standing open problem in the resource theory of coherence \cite{winter_2016,yadin_2016,bendana_2017,streltsov_2017,zhao_2018-1}, and its solution would shed light on concretely achievable possibilities in coherence manipulation.

In this work, we solve this problem completely: namely, we analytically compute the maximal rates of coherence distillation under SIO and PIO, showing that they coincide on all states. By introducing an SIO coherence monotone which does not change when multiple copies of a state are considered, we establish a simple criterion to decide whether a given quantum system is asymptotically distillable or not. In the former case, we derive an upper bound on the SIO distillable coherence and show that it can in fact be achieved by an explicitly constructed PIO protocol. This leads us to the surprising conclusion that the distillable coherence is the same under SIO and PIO. Moreover, it also shows that the optimal SIO distillation protocol can be realized by appending incoherent ancillae, applying incoherent unitaries, and making incoherent measurements; thus, it is easily implementable in practice. Our findings establish in particular that bound coherence is a {\em generic} phenomenon. Specifically, we show that almost all quantum states are undistillable under SIO, with the only distillable ones being those whose density matrix contains a submatrix proportional to a pure state. This demonstrates fundamental limitations of SIO in the resource theory of coherence. To arrive at the above results we introduce a plethora of tools of independent interest, including an efficiently computable semidefinite programming (SDP) expression characterizing the maximal achievable fidelity in the distillation of a single bit of coherence, and an entire new family of SIO monotones. Our work substantially advances the theoretical and practical study of quantum coherence.

\textbf{\em A new SIO monotone}. --- %
Let us begin by recalling the basic formalism of the resource theory of quantum coherence. The set $\I$ of free states, known as  incoherent states, consists of all density matrices diagonal in a given $d$-dimensional orthonormal basis $\{\ket{i}\}$. We will denote by $\Delta$ the dephasing map, defined by $\Delta(\cdot) = \sum_i \proj{i} (\cdot) \proj{i}$, and by $\ket{\Psi_m} = \frac{1}{\sqrt{m}} \sum_{i=1}^m \ket{i}$ the maximally coherent state of dimension $m$. As for the free operations, we will focus on strictly incoherent operations (SIO), defined as those channels $\Lambda$ that admit a Kraus decomposition $\Lambda(\cdot) = \sum_\alpha K_\alpha (\cdot) K_\alpha^\dagger$ such that $K_\alpha \Delta(\rho) K_\alpha^\dagger = \Delta(K_\alpha \rho K_\alpha^\dagger)$ for all $\alpha$ and $\rho$.
We will also consider the subset of physically incoherent operations (PIO), which are all maps that admit an incoherent dilation (i.e.~can be implemented by appending an incoherent ancilla and performing incoherent unitaries, permutations, and incoherent projections only) \cite{chitambar_2016}.

We now introduce a straightforwardly computable quantity that we name \textit{maximal coherence} of $\rho$, defined by
\vspace{-.5\baselineskip}\begin{equation}\begin{aligned}\label{eta}
	\eta(\rho) \coloneqq \max_{i \neq j} \frac{|\rho_{ij}|}{\sqrt{\rho_{ii}\rho_{jj}}}\, ,
\vspace{-.5\baselineskip}\end{aligned}\end{equation}
where the optimization is over all choices of indices such that $\rho_{ii} \neq 0 \neq \rho_{jj}$, with $\eta(\rho) = 0$ if no such choice exists. Alternatively, this quantity can be understood as the largest modulus of an off-diagonal element of the matrix $\Delta(\rho)^{-1/2}\rho\Delta(\rho)^{-1/2}$.

We first notice that for all states $\rho$, one has $0\leq \eta(\rho)\leq 1$. This follows from the positivity of the principal minor of $\rho$ of order $2$ corresponding to the rows and columns identified by indices $i$ and $j$, which implies that $\rho_{ii}\rho_{jj}\geq |\rho_{ij}|^2$ for any choice of $i,j$. Moreover, we see by definition that $\eta(\rho)=0$ iff $\rho$ is incoherent, and $\eta(\rho)=1$ iff there are indices $1\leq i\neq j\leq d$ such that $\Pi_{ij}\rho\Pi_{ij}$ is proportional to a pure state, where $\Pi_{ij}$ is the projector onto $\sspan\{\ket{i},\ket{j}\}$.

An important property of $\eta$ is that it is, in fact, monotonically non-increasing under SIO. Precisely, consider an SIO operation $\Lambda$ acting on a $d$-dimensional system, which can be written as
\begin{equation}\begin{aligned}
	\Lambda(\cdot) = \sum\nolimits_\alpha U_{\pi_\alpha}^\intercal D_\alpha (\cdot) D_\alpha^* U_{\pi_\alpha}\, , \label{SIO Kraus}
\end{aligned}\end{equation}
where the $\pi_\alpha$ are permutations, $U_{\pi_\alpha} \coloneqq \sum_{i=1}^{d} \ketbraa{\pi_\alpha(i)}{i}$ are the unitaries that implement them, and the matrices $D_\alpha\coloneqq \sum_{i=1}^d d_\alpha(i) \ketbra{i}{i}$ are all diagonal. This representation has some technical issues when input and output dimensions are different, but this is irrelevant for the present argument \footnote{See the Supplemental Material, where we provide explicit proofs of some of the results discussed in the main text.}. For two arbitrary indices $1\leq i\neq j\leq d$, we can write
\begin{align*}
\left|\Lambda(\rho)_{ij}\right| &\texteq{(i)} \left| \sum\nolimits_\alpha d_\alpha\left(\pi_\alpha(i)\right) d_\alpha\left(\pi_\alpha(j)\right)^* \rho_{\pi_\alpha(i), \pi_\alpha(j)} \right| \\
&\leq \sum\nolimits_\alpha \left| d_\alpha\left(\pi_\alpha(i)\right)\right| \left|d_\alpha\left(\pi_\alpha(j)\right)\right|\, |\rho_{\pi_\alpha(i), \pi_\alpha(j)}| \\
&\textleq{(ii)} \eta(\rho) \sum\nolimits_\alpha \left| d_\alpha\left(\pi_\alpha(i)\right)\right| \left|d_\alpha\left(\pi_\alpha(j)\right)\right| \\&\quad \times \sqrt{\rho_{\pi_\alpha(i),\pi_\alpha(i)}}\sqrt{\rho_{\pi_\alpha(j), \pi_\alpha(j)}} \\
&\textleq{(iii)} \eta(\rho) \left( \sum\nolimits_\alpha \left| d_\alpha\left(\pi_\alpha(i)\right)\right|^2 \rho_{\pi_\alpha(i),\pi_\alpha(i)}\right)^{1/2} \\&\quad \times \left( \sum\nolimits_\alpha \left| d_\alpha\left(\pi_\alpha(j)\right)\right|^2 \rho_{\pi_\alpha(j), \pi_\alpha(j)}\right)^{1/2} \\
&\texteq{(iv)} \eta(\rho) \mbox{$\sqrt{\Lambda(\rho)_{ii} \Lambda(\rho)_{jj}}$}\, .
\end{align*}
The above steps are justified as follows: (i) we employed the Kraus representation of Eq.~\eqref{SIO Kraus}; (ii) since $i\neq j$ by hypothesis, for all permutations $\pi_\alpha$ we have also $\pi_\alpha(i)\neq \pi_\alpha(j)$ and hence $|\rho_{\pi_\alpha(i), \pi_\alpha(j)}|\leq \eta(\rho) \sqrt{\rho_{\pi_\alpha(i),\pi_\alpha(i)}}\sqrt{\rho_{\pi_\alpha(j), \pi_\alpha(j)}}$; (iii) we applied the Cauchy--Schwarz inequality; (iv) we resorted once more to the representation in Eq.~\eqref{SIO Kraus}.

Another property of $\eta$ which we note is its lower semicontinuity (l.s.c.), that is, the fact that for any sequence $\{\rho_k\}_k$ converging to $\rho$, we have $\eta(\rho)\leq \liminf_{k \to\infty} \eta(\rho_k)$. This follows by noting that $\eta$ can be expressed the maximum over indices $i \neq j$ of the functions $\rho \mapsto f_{ij}(\rho)g_{ij}(\rho)$, where $f_{ij}(\rho) = |\rho_{ij}|$, while $g_{ij}(\rho)$ is defined as $g_{ij}(\rho) = 1\big/\!\!\sqrt{\rho_{ii}\rho_{jj}}$ if both $\rho_{ii},\rho_{jj} \neq 0$, and $g_{ij}(\rho)=0$ otherwise. Both $f_{ij}$ and $g_{ij}$ can be noticed to be nonnegative l.s.c.\ functions, which means that their product will also be l.s.c.; $\eta$ is then a maximum of a finite family of l.s.c.\ functions, and thus is l.s.c.\ itself.

We remark that the measure $\eta$ can be related to similarly defined measures of maximal correlation between classical random variables~\cite{Hirschfeld1935,Gebelein1941,Witsenhausen1975} and quantum states~\cite{Beigi2013,Delgosha2014,Beigi2015,Beigi2018}. In particular, observe that the maximal coherence measure introduced here can in fact be computed as the quantum maximal correlation~\cite{Beigi2013} of the corresponding maximally correlated state $\rho' = \sum_{i,j} \rho_{ij} \ketbra{ii}{jj}$. However, importantly, the monotonicity of the former under SIO does \emph{not} follow from the monotonicity of the latter under local operations, as not every SIO corresponds to a local operation acting on $\rho'$.

\textbf{\em Distillability criterion}. --- %
The task of coherence distillation~\cite{winter_2016,regula_2017,zhao_2018-1} is concerned with the conversion of general quantum states into maximally coherent states $\Psi_m$. The error in the distillation of a state under a set of quantum channels $\O$ is characterized by the fidelity of distillation
\begin{equation}\begin{aligned}
	F_\O (\rho, m) \coloneqq \sup_{\Lambda \in \O} F\left(\Lambda(\rho), \Psi_m\right) ,
\end{aligned}\end{equation}
where $F(\sigma,\omega)\coloneqq \left\|\!\sqrt\sigma \!\sqrt\omega\right\|_1^2$. The (asymptotic) \textit{distillable coherence} is then the maximal rate at which independent and identically distributed copies of a quantum state can be transformed into copies of the maximally coherent qubit state $\Psi_2$ (coherence bit) with asymptotically vanishing error; precisely, we have
\begin{equation}\begin{aligned}
	C_{d,\O} (\rho) \coloneqq \sup \Big\{\:r \ \Big| \  \lim_{n\to\infty} F_\O(\rho^{\ox n}, 2^{rn}) = 1 \:\Big\}.
\end{aligned}\end{equation}
We will say that a state $\rho$ is distillable under $\O$ if $C_{d,\O} (\rho) > 0$.

We now make a crucial observation which lets us immediately relate the maximal coherence $\eta$ to the problem of coherence distillation under SIO. It is the fact that $\eta$ obeys the so-called \textit{tensorization property} \footnote{We note that the existence of a monotone obeying the tensorization property has also been used to study distillation of quantum resources in continuous variable systems~\cite{giedke_2002,lami_2018}.}\nocite{giedke_2002,lami_2018}, that is,
$\eta \left( \rho \otimes \sigma \right) = \max\left\{ \eta(\rho), \,\eta(\sigma) \right\}$, $\forall\ \rho,\sigma $.
To prove this identity, observe that, according to Eq.~\eqref{eta}, computing ${\eta(\rho\otimes\sigma)}$ corresponds to maximizing the function $\big|(\rho\otimes\sigma)_{ik,jl}\big|\,\Big/\!\sqrt{(\rho\otimes \sigma)_{ik,ik} (\rho\otimes\sigma)_{jl,jl}}$ over all pairs of indices $(ik) \neq (jl)$, and that this is equivalent to maximizing $\left(|\rho_{ij}|\big/\!\sqrt{\rho_{ii}\rho_{jj}}\right)\left(|\sigma_{kl}|\big/\!\sqrt{\sigma_{kk}\sigma_{ll}}\vphantom{\sqrt{\rho_{ij}}}\right)$ over choices such that $i \neq j$ or $k \neq l$. The latter maximum is achieved on pairs either of the form $(ik,jk)$ with $i\neq j$ or of the form $(ik,il)$ with $k\neq l$, which corresponds precisely to the larger of $\eta(\rho)$ and $\eta(\sigma)$.

By the tensorization and  monotonicity of $\eta$, we readily obtain one of our main results: a necessary and sufficient criterion for the distillability of an arbitrary quantum state under SIO.
\begin{theorem}\label{thm distillability}
For all states $\rho$, the following are equivalent: (a) $C_{d,\SIO}(\rho) > 0$; (b) $C_{d,\PIO}(\rho)>0$; and (c) $\eta(\rho) = 1$.
\end{theorem}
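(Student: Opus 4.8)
The plan is to establish the cycle of implications $(c) \Rightarrow (b) \Rightarrow (a) \Rightarrow (c)$, exploiting the monotonicity and tensorization of $\eta$ for the direction $(a)\Rightarrow(c)$, and an explicit distillation protocol for $(c)\Rightarrow(b)$. The implication $(b)\Rightarrow(a)$ is immediate since $\PIO\subseteq\SIO$, so that $F_\PIO(\rho^{\ox n},m)\leq F_\SIO(\rho^{\ox n},m)$ and hence $C_{d,\PIO}(\rho)\leq C_{d,\SIO}(\rho)$.

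For $(a)\Rightarrow(c)$, I would argue by contraposition: suppose $\eta(\rho)<1$. By the tensorization property, $\eta(\rho^{\ox n})=\eta(\rho)<1$ for every $n$. Since every $\SIO$ map is $\eta$-monotone, $\eta(\Lambda(\rho^{\ox n}))\leq\eta(\rho^{\ox n})=\eta(\rho)$ for any $\SIO$ channel $\Lambda$. On the other hand, the target state $\Psi_m$ has $\eta(\Psi_m)=1$ for $m\geq 2$, and more importantly $\eta$ is lower semicontinuous and bounded above by $\eta(\rho)<1$ on the entire $\SIO$-image of $\rho^{\ox n}$; so any state $\Lambda(\rho^{\ox n})$ that is close to $\Psi_m$ must still satisfy $\eta(\Lambda(\rho^{\ox n}))\leq\eta(\rho)$. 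The remaining point is to convert this into a statement about the fidelity $F(\Lambda(\rho^{\ox n}),\Psi_m)$: I would show that high fidelity with $\Psi_m$ forces $\eta$ to be close to $1$. Concretely, if $F(\sigma,\Psi_m)\geq 1-\epsilon$ then $\bra{\Psi_m}\sigma\ket{\Psi_m}\geq 1-\epsilon$, which pins down the relevant $2\times 2$ principal submatrices of $\sigma$ (say on indices $1,2$) to be within $O(\sqrt\epsilon)$ of $\frac12\proj{0}\!+\!\frac12\proj{1}$-type blocks of a pure state, forcing $\eta(\sigma)\geq 1-O(\sqrt\epsilon)$ by a direct estimate on $|\sigma_{12}|/\sqrt{\sigma_{11}\sigma_{22}}$. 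Taking $\epsilon\to 0$ then contradicts the uniform bound $\eta\leq\eta(\rho)<1$, so no distillation rate $r>0$ is achievable and $C_{d,\SIO}(\rho)=0$.

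For $(c)\Rightarrow(b)$, assume $\eta(\rho)=1$. By the characterization of the $\eta=1$ case noted in the excerpt, there exist indices $i\neq j$ such that $\Pi_{ij}\rho\Pi_{ij}$ is proportional to a pure state — equivalently, the $2\times 2$ submatrix of $\rho$ on rows/columns $i,j$ is rank one. I would then exhibit a $\PIO$ protocol that distills a nonzero rate: project (incoherently) onto $\sspan\{\ket i,\ket j\}$, which succeeds with probability $p=\rho_{ii}+\rho_{jj}>0$ and, conditioned on success, produces exactly a pure coherent qubit state $\psi$ with $|\langle +|\psi\rangle|^2 = \eta(\rho)=1$ up to the relative weights — more carefully, the post-selected state is a pure qubit with some fixed coherence; if it is not already maximally coherent one applies the known $\SIO$ (indeed $\IO$/$\PIO$-compatible, since pure-state transformations coincide) conversion. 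Running this on $\rho^{\ox n}$ and keeping only the (asymptotically $\sim pn$ many) successful branches, standard one-shot-to-asymptotic arguments give $C_{d,\PIO}(\rho)\geq p\cdot C_{d}(\psi)>0$ because a single copy of any pure coherent state has strictly positive distillable coherence under $\PIO$. Hence $C_{d,\PIO}(\rho)>0$, completing the cycle.

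The main obstacle is the quantitative step in $(a)\Rightarrow(c)$: turning ``the output has high fidelity with $\Psi_m$'' into ``the output has $\eta$ close to $1$'', uniformly in the dimension $m$ and in $n$. One must be careful that $\Lambda(\rho^{\ox n})$ lives in a potentially large space and that fidelity with $\Psi_m$ only controls a $2$-dimensional sector; the lower semicontinuity of $\eta$ and the explicit $2\times 2$-minor formula for $\eta$ are exactly what make this work, but getting a clean dimension-independent bound (so that it survives $n\to\infty$ for any fixed rate $r>0$) is the delicate part — a detailed treatment is deferred to the Supplemental Material.
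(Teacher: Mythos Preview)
Your cycle $(c)\Rightarrow(b)\Rightarrow(a)\Rightarrow(c)$ and its ingredients (monotonicity and tensorization of $\eta$ for one direction, an explicit PIO projection-to-pure-qubit protocol for the other) match the paper's proof. The one divergence is in how you execute $(a)\Rightarrow(c)$. The paper avoids any quantitative fidelity estimate: it invokes lower semicontinuity in a single line,
\[
1=\eta(\Psi_2)\leq\liminf_n\eta\!\left(\Lambda_n(\rho^{\otimes n})\right)\leq\eta(\rho^{\otimes n})=\eta(\rho),
\]
which disposes of the step you flag as ``delicate.'' Your proposed route---controlling a single $2\times2$ principal block of $\sigma$ when $F(\sigma,\Psi_m)\geq1-\epsilon$---does suffer from the dimension problem you anticipate, since $|\sigma_{12}-\tfrac1m|\lesssim\sqrt\epsilon$ only yields $\eta(\sigma)\geq 1-O(m\sqrt\epsilon)$, which is useless when $m=2^{rn}$. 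But the repair is easy: either reduce to $m=2$ first (distilling $\Psi_m$ with high fidelity implies distilling $\Psi_2$ via SIO, and for a qubit output one has $F(\sigma,\Psi_2)\leq\tfrac12(1+\eta(\sigma))$ exactly, using $2\sqrt{\sigma_{11}\sigma_{22}}\leq 1$; this is precisely the bound the paper establishes in Theorem~\ref{F cosbit thm}), or bound \emph{all} off-diagonals simultaneously by $|\sigma_{ij}|\leq\eta(\sigma)\sqrt{\sigma_{ii}\sigma_{jj}}$ and apply Cauchy--Schwarz to obtain
\[
F(\sigma,\Psi_m)\leq\frac{1+(m-1)\,\eta(\sigma)}{m}\leq\frac{1+\eta(\sigma)}{2}\qquad\text{for every }m\geq2.
\]
With either fix your argument closes and coincides with the paper's.
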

\begin{proof}
Noting that $\eta(\Psi_2) = 1$ and remembering that $\eta$ is l.s.c., we see that for SIO distillation to be possible, there needs to exist a sequence of SIO operations $\Lambda_n$ such that $\eta\!\left(\lim_{n \to \infty}\Lambda_n(\rho^{\otimes n})\right) = 1$. However, it holds that
\begin{equation*}\begin{aligned}
	\eta\!\left(\lim_{n \to \infty} \!\Lambda_n\!(\rho^{\otimes n})\!\right) \!\textleq{(i)} \liminf_{n\to\infty} \eta\!\left(\Lambda_n(\rho^{\otimes n})\!\right) \!\textleq{(ii)}\! \lim_{n\to\infty} \!\eta\!\left(\rho^{\otimes n}\!\right) \texteq{(iii)} \eta(\rho) ,
\end{aligned}\end{equation*}
where (i) is due to the l.s.c.~of $\eta$, (ii) comes from its monotonicity, and (iii) follows from the tensorization property. This shows in particular that any state with $\eta(\rho) < 1$ is SIO (hence PIO) undistillable. Conversely, the PIO protocol given in the proof of the forthcoming Theorem~\ref{Q lower bound prop} (see below), shows that every state with $\eta(\rho)=1$ is PIO distillable.
\end{proof}
The above Theorem~\ref{thm distillability} establishes a complete characterization of distillability under SIO and PIO. In particular, it is not difficult to see that any generic quantum state exhibits bound coherence, and that the condition for distillability of a state $\rho$ --- i.e., the existence of a submatrix of $\rho$ in the basis $\{\ket{i}\}_i$ proportional to a pure state --- is an extremely restrictive property, satisfied only by a zero-measure class of mixed states. We stress that the proof of the Theorem in fact establishes the stronger statement that any state $\rho$ such that $\eta(\rho)<1$ cannot be used to distill even a single coherence bit, no matter how large the number of available copies of $\rho$ is.
We will see later that this relation between SIO and PIO extends beyond the distillability criterion.

\textbf{\em Fidelity of distillation under SIO}. --- %
It follows from~\cite[Thm.~10]{zhao_2018-1} that the fidelity of distillation of an $m$-dimensional maximally coherent state $\Psi_m$ under SIO for any state $\rho$ is
\begin{equation}
F_\SIO(\rho,m)\! =\! \max\! \bigg\{\! \Tr\! \rho A\! \ \big|\,  0\! \leq\! A\!\! \leq\! \id,\, \Delta( A\!)\! =\!\! \frac{\id}{m},\, \CN(A\!) \!\leq\! m \bigg\} ,
\label{fidelity distillation SIO}
\end{equation}
where the coherence number $\CN(A)$ of $A\geq 0$ is defined as the minimal integer $r$ such that $A$ can be written as a positive linear combination of rank-one projectors $\ketbra{x_i}{x_i}$ with $\rank\left(\Delta(\ketbra{x_i}{x_i})\right) \leq r$ for all $i$~\cite{chin_2017,regula_2018-2}. For the case of distilling a coherence bit $\Psi_2$, we are able to obtain the following simplified characterization.
\begin{theorem}\label{thm:sdp}
The fidelity of distillation of a single bit of coherence under SIO is given by the SDP
\begin{subequations}\label{eq:fidelity_sdp}
\begin{align}
	F_\SIO(\rho,2) &= \max_{\substack{-\id \leq X \leq \id\\\Delta(X)=0\\X \mgeq 0}} \frac{1}{2}\Big(\Tr |\rho| X + 1 \Big)\label{eq:sdp_max}
	\\&= \min_{\substack{D = \Delta(D)\\N \mgeq 0}} \frac{1}{2} \Big(\norm{|\rho| + D + N}{1} + 1 \Big) ,\label{eq:sdp_min}
	\end{align}
\end{subequations}
where $A\succeq 0$ signifies the entrywise inequality $A_{ij}\geq 0$ for all $i,j$, and $|\rho|$ stands for the entrywise modulus of $\rho$.
\end{theorem}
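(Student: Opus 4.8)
The plan is to massage the characterisation \eqref{fidelity distillation SIO} at $m=2$ into the primal SDP \eqref{eq:sdp_max}, and then obtain \eqref{eq:sdp_min} by Lagrange duality. First I would show that the constraint $A\le\id$ in \eqref{fidelity distillation SIO} is free when $m=2$: any $A$ with $\CN(A)\le2$ can be written as $A=\sum_{i<j}B_{ij}+\sum_i c_i\proj i$ with $c_i\ge0$ and each $B_{ij}\succeq0$ supported on $\sspan\{\ket i,\ket j\}$, and since $\langle\psi|B_{ij}|\psi\rangle\le\big(\sqrt{(B_{ij})_{ii}}\,|\psi_i|+\sqrt{(B_{ij})_{jj}}\,|\psi_j|\big)^2\le2(B_{ij})_{ii}|\psi_i|^2+2(B_{ij})_{jj}|\psi_j|^2$ for every $\ket\psi$, summing over $i<j$ and using $\Delta(A)=\id/2$ (i.e.\ $\sum_{j\neq i}(B_{ij})_{ii}+c_i=\tfrac12$) yields $\langle\psi|A|\psi\rangle\le\sum_i(1-c_i)|\psi_i|^2\le\|\psi\|^2$. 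Hence $A\ge0$, $\Delta(A)=\id/2$, $\CN(A)\le2$ already force $A\le\id$.

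Next I would eliminate the phases. Since each $B_{ij}$ lives on a two-dimensional subspace, a diagonal unitary supported on $\{\ket i,\ket j\}$ rotates its off-diagonal entry arbitrarily without changing its diagonal or positivity. Hence (i) for a feasible $A$, replacing each $B_{ij}$ by the rank-one block with the same diagonal and off-diagonal $\sqrt{(B_{ij})_{ii}(B_{ij})_{jj}}\ge|(B_{ij})_{ij}|$ produces $C$ with $C\succeq0$ entrywise, $\Delta(C)=\id/2$, $\CN(C)\le2$ and $\Tr|\rho|C\ge\Tr\rho A$; and (ii) for any such $C=\sum_{i<j}B_{ij}+\sum_i c_i\proj i$, rotating each $B_{ij}$ by the phase of $\rho_{ij}$ gives a feasible $A$ with $\Tr\rho A=\Tr|\rho|C$. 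Therefore $F_\SIO(\rho,2)=\max\{\Tr|\rho|C\mid 0\le C\le\id,\ C\succeq0,\ \Delta(C)=\tfrac{\id}{2},\ \CN(C)\le2\}$.

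The crux is then that $\CN(C)\le2$ can be omitted. Given $C$ with $0\le C\le\id$, $C\succeq0$ entrywise, $\Delta(C)=\id/2$, put $G:=2C-\id$, which is symmetric with zero diagonal, nonnegative entries and $\|G\|\le1$; on each connected component of its support graph $G$ is irreducible, so Perron--Frobenius gives a strictly positive eigenvector there of eigenvalue $\le1$, and patching these (with $w_i=1$ on isolated indices) yields $w>0$ with $Gw\le w$. Then $p_{ij}:=\tfrac12 G_{ij}w_j/w_i\ge0$ obeys $p_{ij}p_{ji}=C_{ij}^2$ and $\sum_{j\neq i}p_{ij}=(Gw)_i/(2w_i)\le\tfrac12$, so $C=\sum_{i<j}B_{ij}+\sum_i c_i\proj i$ with $B_{ij}:=p_{ij}\proj i+p_{ji}\proj j+C_{ij}(\ketbraa{i}{j}+\ketbraa{j}{i})\succeq0$ and $c_i:=\tfrac12-\sum_{j\neq i}p_{ij}\ge0$, i.e.\ $\CN(C)\le2$; substituting $X=2C-\id$ gives exactly \eqref{eq:sdp_max}. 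Finally \eqref{eq:sdp_min} is the Lagrange dual of \eqref{eq:sdp_max}: the multiplier of $\Delta(X)=0$ furnishes $D$, that of the entrywise constraint furnishes $N\succeq0$, and the trace-norm identity $\norm{M}{1}=\min\{\Tr R_++\Tr R_-:R_\pm\succeq0,\,R_+-R_-=M\}$ eliminates the rest, with strong duality and attainment following from Slater's condition (the matrix $X$ with all off-diagonal entries a small $t\in(0,1/(d-1))$ is strictly primal feasible). I expect this third step to be the main obstacle — its engine is the Perron--Frobenius inequality $Gw\le w$ — and it crucially uses the first step, so that the $C$'s with $\CN(C)\le2$ it produces are genuinely admissible $A$'s in \eqref{fidelity distillation SIO}.
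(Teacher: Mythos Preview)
Your argument is correct, but it follows a genuinely different path from the paper's. The paper invokes \cite[Thm.~1]{ringbauer_2017} as a black box: the criterion $\CN(A)\le 2\Leftrightarrow 2\Delta(A)-|A|\ge 0$ turns the coherence-number constraint directly into the PSD condition $|Y|\le\id$ (with $A=\tfrac12(\id+Y)$), and a one-line Perron--Frobenius observation (the spectral radius of an entrywise nonnegative matrix is an eigenvalue) then shows that $|Y|\le\id$ already forces $-\id\le Y\le\id$. The phase optimisation is done last, by writing $Y=X\circ\omega$ with $X=|Y|$.

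You reverse the order and replace the cited lemma by two direct constructions. Your Step~1 (showing $A\le\id$ is automatic) and Step~3 (showing $\CN(C)\le 2$ is automatic for entrywise nonnegative $C$ with $0\le C\le\id$, $\Delta(C)=\id/2$) together amount to a self-contained proof of both directions of the Ringbauer et al.\ criterion in the special case $\Delta(A)=\id/2$, with the harder direction driven by the full Perron--Frobenius theorem (existence of a strictly positive Perron vector on each irreducible block, yielding the sub-invariant vector $w$ and the diagonal similarity that splits $C$ into rank-one $2\times 2$ pieces). Your phase elimination sits in the middle and is carried out block-by-block rather than via a global Hadamard factorisation. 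The trade-off is clear: the paper's route is shorter and modular, while yours is longer but entirely self-contained and makes explicit \emph{why} the $\CN\le 2$ constraint collapses to a spectral one --- the Perron vector is precisely the gauge that diagonalises the problem. The duality step and the Slater point are handled identically in both.
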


\begin{proof}
We sketch the main idea of the argument, deferring the details to the SM~\cite{Note1}. The expression in Eq.\eqref{fidelity distillation SIO} for the distillation fidelity involves the nontrivial constraint $\CN(A)\leq m$ on the coherence number of the variable $A\geq 0$. For $m=2$, this can be cast into an analytically manageable form thanks to~\cite[Thm.~1]{ringbauer_2017}, which states that $\CN(A)\leq 2$ iff $2\Delta(A) - |A|\geq 0$. By leveraging this criterion and choosing carefully the optimization variables, one arrives at Eq.~\eqref{eq:sdp_max}. Finally, Eq.~\eqref{eq:sdp_min} is obtained by taking the SDP dual.
\end{proof}

The above results can be compared with analogous expressions for $F_\MIO(\rho,2)$ and $F_\DIO(\rho,2)$~\cite{regula_2017}.
In particular, it is known that $F_\MIO(\psi,m) = F_\SIO(\psi,m)$ for all pure states $\psi$ and all $m$~\cite{regula_2017}.
It is left to determine how closely one can {\em approximate} distillation of a perfect bit of coherence by means of SIO when one is given a large number of copies of an input state. This leads us to investigate the quantity $F_{\SIO}(\rho^{\otimes n}, 2)$ as a function of $\rho$ and $n$, and in particular its asymptotic properties in the limit of large $n$. The following result, whose full proof we provide in the SM~\cite{Note1}, provides an operational interpretation of the SIO monotone $\eta$ introduced here.
\begin{theorem} \label{F cosbit thm}
For all states $\rho$ it holds that
\begin{equation}\begin{aligned}
	\lim_{n\to \infty} F_{\SIO}\left(\rho^{\otimes n},2\right) = \frac{1+\eta(\rho)}{2}\, ,
\label{limit F SIO}
\end{aligned}\end{equation}
and the convergence in the above identity is exponentially fast.
\end{theorem}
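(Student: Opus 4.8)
The plan is to apply the semidefinite program of Theorem~\ref{thm:sdp} to $\sigma:=\rho^{\otimes n}$, exploiting that the entrywise modulus is multiplicative, $|\rho^{\otimes n}|=|\rho|^{\otimes n}$. The \emph{upper bound} is then easy and holds for every $n$ (already for $n=1$): given any $X$ feasible in Eq.~\eqref{eq:sdp_max}, put $\ket{v}:=\sum_{\vec k}\sqrt{\sigma_{\vec k\vec k}}\,\ket{\vec k}$, a unit vector since $\Tr\sigma=1$. For $\vec i\neq\vec j$ one has $|\sigma_{\vec i\vec j}|=\prod_a|\rho_{i_aj_a}|\le\eta(\rho)\sqrt{\sigma_{\vec i\vec i}\sigma_{\vec j\vec j}}$, because each coordinate with $i_a\neq j_a$ contributes a factor $|\rho_{i_aj_a}|\le\eta(\rho)\sqrt{\rho_{i_ai_a}\rho_{j_aj_a}}$, each coordinate with $i_a=j_a$ contributes $\rho_{i_ai_a}$, and there is at least one of the former. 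Hence, using $X\succeq 0$ (entrywise), $\Delta(X)=0$, and $X\le\id$, one gets $\Tr|\sigma|X=\sum_{\vec i\neq\vec j}|\sigma_{\vec i\vec j}|X_{\vec i\vec j}\le\eta(\rho)\sum_{\vec i,\vec j}\sqrt{\sigma_{\vec i\vec i}\sigma_{\vec j\vec j}}\,X_{\vec i\vec j}=\eta(\rho)\bra{v}X\ket{v}\le\eta(\rho)$, so $F_\SIO(\rho^{\otimes n},2)\le\frac{1+\eta(\rho)}{2}$.

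For the \emph{lower bound} (assume $\eta(\rho)>0$, the case $\eta(\rho)=0$ being immediate) I would construct a good feasible $X$. The key point is that it suffices to produce a real symmetric $Y$ with nonnegative entries, zero diagonal, row sums bounded by the corresponding diagonal entries of $\sigma$, and supported only on index pairs $(\vec i,\vec j)$ with $|\sigma_{\vec i\vec j}|=\eta(\rho)\sqrt{\sigma_{\vec i\vec i}\sigma_{\vec j\vec j}}$: for such a $Y$ the matrix $X:=\Delta(\sigma)^{-1/2}Y\Delta(\sigma)^{-1/2}$ (pseudo-inverse on $\supp\Delta(\sigma)$) is feasible, since $-\id\le X\le\id$ reduces, after conjugating by $\Delta(\sigma)^{1/2}$, to the positive semidefiniteness of $\Delta(\sigma)\pm Y$ --- which follows from a weighted graph-Laplacian argument given the entrywise sign and row-sum conditions on $Y$ --- while its objective value is $\eta(\rho)$ times the sum of the entries of $Y$. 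Fixing an $\eta$-achieving pair $(i^*,j^*)$ with $p:=\rho_{i^*i^*}\ge q:=\rho_{j^*j^*}>0$, the support condition holds exactly when $\vec i$ and $\vec j$ differ in a single coordinate, which there carries the values $i^*$ and $j^*$. The problem thus becomes: on the graph $G_n$ with vertices $(\supp\nu)^n$ (where $\nu(k):=\rho_{kk}$), edges given by such single flips, and vertex capacities $\sigma_{\vec k\vec k}=\prod_a\nu(k_a)$, find a fractional matching maximising the matched capacity.

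Finally I would analyse this matching. $G_n$ is a disjoint union of Hamming cubes $Q_\ell$, one for each choice of the coordinate set carrying a letter in $\{i^*,j^*\}$ together with the values of the remaining coordinates; within an $\ell$-dimensional component the capacities depend only on the number $w$ of $j^*$'s, being $\omega(\xi)\,p^{\ell-w}q^w$ with $\omega(\xi)$ the product of the $\nu$-values of the fixed coordinates. Symmetrising over the automorphisms of $Q_\ell$ reduces the matching to the flows $\Phi_w$ between consecutive layers, and the per-vertex capacity constraints collapse to $\Phi_w+\Phi_{w+1}=\binom{\ell}{w}p^{\ell-w}q^w$ with $\Phi_{\ell+1}=0$, solved by $\Phi_w=\sum_{k=w}^{\ell}(-1)^{k-w}\binom{\ell}{k}p^{\ell-k}q^k$; in particular $\Phi_1=p^\ell-(p-q)^\ell$. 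These flows are nonnegative by an elementary estimate (using that $w\mapsto\binom{\ell}{w}p^{\ell-w}q^w$ is unimodal), so the matching is feasible and leaves uncovered exactly $\omega(\xi)(p-q)^\ell$ of capacity in that component. Summing over all components, the total uncovered capacity is $\sum_{\ell=0}^n\binom{n}{\ell}(p-q)^\ell(1-p-q)^{n-\ell}=(1-2q)^n$. Hence $\Tr|\sigma|X\ge\eta(\rho)\bigl(1-(1-2q)^n\bigr)$, so $\frac{1+\eta(\rho)}{2}-\frac{\eta(\rho)}{2}(1-2q)^n\le F_\SIO(\rho^{\otimes n},2)\le\frac{1+\eta(\rho)}{2}$, which is the claim, with exponential rate $1-2q=1-2\min(\rho_{i^*i^*},\rho_{j^*j^*})<1$. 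The main obstacle is this last step --- showing the layered fractional matching on $Q_\ell$ wastes only the exponentially small amount $(p-q)^\ell$ --- while the reformulation and the upper bound are comparatively routine.
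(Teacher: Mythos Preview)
Your argument is correct, including the exponential rate $1-2\min(\rho_{i^*i^*},\rho_{j^*j^*})$, which matches the paper's constant $\mu_\rho$. The one point you flag as the ``main obstacle'' --- nonnegativity of the layered flows $\Phi_w$ --- is genuine but straightforward: setting $r=q/p\in[0,1]$ and $g_w^{(\ell)}(r):=\Phi_w/p^\ell$, one checks $g_1^{(\ell)}(r)=1-(1-r)^\ell\geq 0$ and, for $w\geq 2$, $\frac{d}{dr}g_w^{(\ell)}(r)=\ell\,g_{w-1}^{(\ell-1)}(r)$, so nonnegativity follows by induction on $w$ (using $g_w^{(\ell)}(0)=0$). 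Your diagonal dominance / Laplacian argument for the feasibility of $X=\Delta(\sigma)^{-1/2}Y\Delta(\sigma)^{-1/2}$ is also fine.

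Your route, however, differs markedly from the paper's. For the \emph{upper bound}, the paper does not touch the SDP: it applies an arbitrary SIO $\Lambda$ with qubit output and uses monotonicity plus tensorization of $\eta$ to get $\eta(\Lambda(\rho^{\otimes n}))\leq\eta(\rho)$, then reads off the fidelity from the explicit $2\times 2$ form of the output. Your SDP inequality $\Tr|\sigma|X\leq \eta(\rho)\langle v|X|v\rangle\leq\eta(\rho)$ is essentially a repackaging of the same two facts inside the primal program, so the two upper bounds are morally the same. For the \emph{lower bound} the approaches are genuinely different. The paper is operational: it builds a simple ``diagonal filtering'' SIO instrument on a single copy whose good outcome (probability $2\min(p,q)$) yields the qubit state $\frac12\begin{pmatrix}1&\eta(\rho)\\\eta(\rho)&1\end{pmatrix}$ with fidelity exactly $(1+\eta(\rho))/2$; applying it independently to each of the $n$ copies and succeeding at least once gives the bound immediately. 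Your approach is variational: you stay inside Theorem~\ref{thm:sdp} and construct a feasible $X$ via a fractional matching on a disjoint union of Hamming cubes. Both reach the identical bound $F_\SIO(\rho^{\otimes n},2)\geq \tfrac{1+\eta(\rho)}{2}-\tfrac{\eta(\rho)}{2}(1-2q)^n$, but the paper's proof is considerably shorter and has the bonus of exhibiting an explicit, easily implementable SIO protocol; yours has the virtue of being entirely self-contained within the SDP framework, requiring no further facts about SIO beyond Theorem~\ref{thm:sdp}.
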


As a particularly strong example of SIO/PIO undistillability, consider the class of qubit states $\rho_\lambda = \lambda \Psi_2 + (1-\lambda) \frac\id2$ with $\lambda \in [0,1]$. An explicit computation yields $\eta(\rho_\lambda) = \lambda$. By constructing a suitable choice of feasible solutions for the SDP~\eqref{eq:fidelity_sdp}~\cite{Note1}, it can be shown that $F_\SIO(\rho_\lambda^{\otimes n}, 2) = (1+\lambda)/2$ for any number of copies $n$. Therefore, not only is the distillation of $\rho_\lambda$ impossible under SIO for $\lambda\neq 1$, it actually is impossible to increase the fidelity of distillation whatsoever by adding more copies of the state.

\textbf{\em Distillable coherence under SIO and PIO.}---%
Although we have proven that most states are bound coherent under SIO/PIO, it could be nevertheless interesting to compute the amount of coherence $C_{d,\SIO/\PIO}$ that can be extracted from distillable states. This is a very different scenario from that considered in Theorem~\ref{F cosbit thm}: while there we were interested in the distillation of \emph{a single} coherence bit with good fidelity, here we look at the maximal \emph{rate} of distillation of bits of coherence with vanishing errors.

Motivated by the properties of the monotone $\eta(\rho)$, we will now consider a quantifier which we will relate to the distillable coherence. For a state $\rho$ such that $\Delta(\rho)>0$, construct the set $E_\rho\coloneqq\left\{ (i,j):\ |\rho_{ij}|= \sqrt{\rho_{ii}\rho_{jj}} \right\}$. As we show in the SM~\cite{Note1}, it turns out that there is a partition $\{I^\rho_s\}_{s\in\mathcal{S}^\rho}$ of $\{1,\ldots, d\}$ such that $(i,j)\in E_\rho$ iff $i,j$ belong to the same set $I^\rho_s$. With this observation, one can show that the operator $\widebar{\rho} \coloneqq \sum\nolimits_{\substack{(i,j)\in E_\rho}} \rho_{ij} \ketbraa{i}{j}$ is a legitimate density matrix, and that the quantifier
\begin{equation}
    Q(\rho)\coloneqq S\left(\Delta(\rho)\right) - S\left(\widebar{\rho}\right)
\label{Q}
\end{equation}
is: (i) nonnegative; (ii) strictly positive iff $\eta(\rho)=1$; and (iii) additive over tensor products.

We will now show that $Q$ in fact exactly quantifies the SIO and PIO distillable coherence of any state. 
The result will strengthen the relation between these two classes of operations, showing that --- in contrast to task such as coherence dilution, where SIO is as powerful as larger sets of operations, in distillation the power of SIO is actually the same as PIO, where the latter is known to define a very limited framework~\cite{chitambar_2016-1}. We note that coherence distillation under PIO beyond pure states has not been characterized before in any way~\cite{streltsov_2017}. As usual, the process of evaluating a maximal distillation rate is composed of two parts. First, one designs a protocol that achieves the conjectured rate in the limit of a large number of copies (direct part). Second, one shows that the performance of this protocol can not be beaten at least asymptotically (converse part).

\begin{theorem} \label{Q lower bound prop}
For all states $\rho$, the distillable coherence under SIO and PIO satisfies $C_{d,\SIO}(\rho) = C_{d,\mathrm{PIO}}(\rho) = Q(\rho)$.
\end{theorem}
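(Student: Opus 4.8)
The plan is to prove the two inequalities $C_{d,\SIO}(\rho) \leq Q(\rho)$ (converse) and $C_{d,\PIO}(\rho) \geq Q(\rho)$ (direct part) separately, using the chain $C_{d,\PIO} \leq C_{d,\SIO}$ (which is trivial since $\PIO\subseteq\SIO$) to conclude equality. For the \textbf{converse part}, I would first establish that $Q$ is a genuine SIO monotone — in fact, one that is \emph{asymptotically continuous} and additive — so that it upper-bounds any achievable distillation rate in the standard way: if $\Lambda_n(\rho^{\otimes n}) \approx \Psi_{2^{rn}}$ with vanishing error, then $rn \leq Q(\Psi_{2^{rn}}) + o(n) = Q(\rho^{\otimes n}) + o(n) = n\, Q(\rho) + o(n)$, where the first step uses monotonicity and asymptotic continuity, the middle step uses monotonicity plus the fact that $Q(\Psi_m) = \log m$ (directly from \eqref{Q}, since $\widebar{\Psi_m} = \Psi_m$ is pure and $\Delta(\Psi_m) = \id/m$), and the last uses additivity. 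The monotonicity of $Q$ under SIO is the key lemma here: using the Kraus form \eqref{SIO Kraus}, one tracks how the ``saturating'' set $E_\rho$ and the block structure $\{I^\rho_s\}$ transform — the diagonal part $\Delta(\rho)$ evolves under a classical stochastic map (whose entropy can only be processed in a controlled way relative to measurement outcomes), while the off-diagonal ``coherent blocks'' $\widebar{\rho}$ are permuted and dephased consistently; the difference of entropies $S(\Delta(\rho)) - S(\widebar{\rho})$ then behaves like a conditional-entropy-type quantity that is non-increasing, much as distillable entanglement bounds arise from the coherent information.

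For the \textbf{direct part}, I would explicitly construct a PIO protocol extracting coherence at rate $Q(\rho)$. The natural strategy: since the blocks $I^\rho_s$ are exactly where $|\rho_{ij}| = \sqrt{\rho_{ii}\rho_{jj}}$, the restriction $\Pi_{I_s^\rho}\,\rho\,\Pi_{I_s^\rho}$ (suitably normalized) is, on each block, proportional to a pure state $\ket{\psi_s}$ — this is the content of the partition claim referenced from the SM. So $\rho$ is, up to the incoherent ``which-block'' classical register with distribution $p_s = \sum_{i \in I_s^\rho}\rho_{ii}$, a classical mixture of pure states $\psi_s$ each living in a known incoherent subspace. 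A PIO protocol can implement an incoherent projective measurement onto the blocks (these are incoherent projectors, so this is a legitimate PIO step), learn $s$, and then perform pure-state coherence distillation on $\psi_s$ — and pure-state distillation is achievable under PIO at rate $S(\Delta(\psi_s))$ (or more precisely one uses the known PIO pure-state distillation result, e.g. from \cite{chitambar_2016}, combined with typicality over the i.i.d. classical register). Averaging over $s$ with the typical frequencies gives rate $\sum_s p_s\, S(\Delta(\psi_s))$, and a direct computation shows this equals $S(\Delta(\rho)) - S(\widebar{\rho}) = Q(\rho)$, because $\widebar\rho$ is block-diagonal with blocks $p_s \proj{\psi_s}$, so $S(\widebar\rho) = H(\{p_s\}) + \sum_s p_s S(\proj{\psi_s}) = H(\{p_s\})$, while $S(\Delta(\rho)) = H(\{p_s\}) + \sum_s p_s S(\Delta(\psi_s))$ by grouping the diagonal into blocks.

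I expect the \textbf{main obstacle} to be the converse — specifically, proving that $Q$ is monotone under \emph{all} SIO (not just the ``nice'' ones), handling the subtleties of input/output dimension mismatch in the Kraus form \eqref{SIO Kraus}, and establishing asymptotic continuity of $Q$ rigorously (one must control $|Q(\rho) - Q(\sigma)|$ in terms of $\|\rho-\sigma\|_1$, which is delicate because $Q$ depends discontinuously on the combinatorial set $E_\rho$ — a small perturbation can shatter a saturating block). The standard workaround is to replace $\widebar\rho$-based arguments with a smoothed/regularized version, or to prove the upper bound via a continuous surrogate quantity that coincides with $Q$ on the relevant states; one likely needs an argument showing that near a state with a given block structure, $Q$ is upper semicontinuous in a way sufficient for the Fannes-type estimate to go through. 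A secondary technical point is making the PIO pure-state sub-protocol work \emph{uniformly} across the (polynomially many) typical values of $s$ while keeping the total error vanishing, which is routine but requires care in bookkeeping the typical sets for the classical register and the conditional pure states simultaneously.
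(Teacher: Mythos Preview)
Your \textbf{direct part} is essentially identical to the paper's: perform the incoherent projective measurement $\{\Pi_{I^\rho_s}\}_{s\in\mathcal{S}^\rho}$ on each copy, obtain pure states $\widebar{\rho}_s$ with frequency $P(s)$, distill each at the known PIO pure-state rate $S(\Delta(\widebar{\rho}_s))$, and verify $\sum_s P(s)\,S(\Delta(\widebar{\rho}_s)) = Q(\rho)$. Nothing to add there.

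Your \textbf{converse part}, however, has a genuine gap, and it is exactly the one you flagged but did not resolve. The quantity $Q$ is \emph{not} asymptotically continuous, and not merely in a way that a Fannes-type estimate can repair: it is catastrophically discontinuous precisely at the target states. For $\sigma_\epsilon \coloneqq (1-\epsilon)\Psi_m + \epsilon\,\id/m$ one has $\eta(\sigma_\epsilon)=1-\epsilon<1$, hence $E_{\sigma_\epsilon}$ is trivial, $\widebar{\sigma_\epsilon}=\Delta(\sigma_\epsilon)$, and $Q(\sigma_\epsilon)=0$, while $Q(\Psi_m)=\log m$. So an SIO protocol can produce outputs $\Lambda_n(\rho^{\otimes n})$ with fidelity arbitrarily close to $1$ with $\Psi_{2^{rn}}$ yet satisfying $Q(\Lambda_n(\rho^{\otimes n}))=0$ for every $n$. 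The chain $rn \approx Q(\Psi_{2^{rn}}) \lesssim Q(\Lambda_n(\rho^{\otimes n})) \leq Q(\rho^{\otimes n})$ breaks at the continuity step and cannot be salvaged by upper semicontinuity either. Relatedly, SIO monotonicity of $Q$ is itself nontrivial and not established anywhere in the paper; your heuristic that ``coherent blocks are permuted and dephased consistently'' does not control how the block structure $E_{\Lambda(\rho)}$ relates to $E_\rho$ under a general SIO.

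The paper's converse takes a \emph{different} route that sidesteps both issues. It does not use $Q$ as the monotone at all. Instead it introduces the family
\[
\mu_k(\rho)\coloneqq \max_{I\subseteq[d],\,|I|\leq k}\ \log\left\|\Pi_I\,\Delta(\rho)^{-1/2}\rho\,\Delta(\rho)^{-1/2}\,\Pi_I\right\|_\infty,
\]
which generalizes $\eta$ (since $\mu_2=\log(1+\eta)$), is built from continuous ingredients, and is shown to be SIO-monotone. After an appropriate smoothing, $\mu_k$ is related to a smoothed conditional max-entropy of the classical pair $(I_\rho,S^\rho_\rho)$; a tailored asymptotic equipartition property then collapses this to the Shannon conditional entropy $H(I_\rho\,|\,S^\rho_\rho)$, which the paper identifies with $Q(\rho)$. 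Monotonicity of the (smoothed) $\mu_k$ then yields $C_{d,\SIO}(\rho)\leq Q(\rho)$. In short: the ``continuous surrogate'' you speculated about is not a smoothed version of $Q$ itself but an entirely separate family of one-shot monotones whose regularization \emph{equals} $Q$.
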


\begin{proof}
To establish that $Q(\rho)$ gives a lower bound to the rate of PIO distillation, given $n$ copies of the state $\rho$, we perform independently on each of them the measurement $\{\Pi_{I^\rho_s} \}_{s\in \mathcal{S}^\rho}$, where $\Pi_{I^\rho_s} \coloneqq \sum_{i\in I^\rho_s} \ketbra{i}{i}$ and $\{I^\rho_s\}_{s \in \mathcal{S}^\rho}$ is the partition of $\{1,\ldots, d\}$ identified above. Setting $P(s)\coloneqq \Tr[\rho\Pi_{I^\rho_s}]$ and $\widebar{\rho}_s\coloneqq P(s)^{-1} \Pi_{I^\rho_s}\rho\Pi_{I^\rho_s}$, we see that this protocol produces an average of $nP(s)$ copies of the states $\widebar{\rho}_s$, which can be shown to be all pure. It is known~\cite{chitambar_2016-1} that there exists a PIO protocol that extracts $S\left( \Delta(\psi)\right)$ coherence bits per copy out of any pure state $\psi$. Applying this procedure to each $\widebar{\rho}_s$ leads to an expected number of coherence bits produced equal to $\sum_{s\in\mathcal{S}^\rho} n P(s) S\left( \Delta(\widebar{\rho}_s)\right) = n Q(\rho)$, achieving a rate $Q(\rho)$. See~\cite{Note1} for further technical details.

To show the converse, consider the family $\{I^\rho_s\}_{s\in\mathcal{S}^\rho}$ of disjoint subsets of $[d]$ as discussed above. For any other state $\sigma$, we can then construct a random variable $S^\rho_\sigma$ supported on $\mathcal{S}^\rho$ whose probability distribution takes the form $P_{S^\rho_\sigma}(s)\coloneqq \Tr\left[\sigma \Pi_{{I_s^\rho}}\right]$. Clearly, $S^\rho_\sigma$ is a coarse-grained version of the random variable $I_\sigma$ distributed according to $P_{I_\sigma} (i)\coloneqq \sigma_{ii} = \braket{i|\sigma|i}$. A first important observation is that the quantifier $Q$ coincides with the conditional entropy of $I_\rho$ given $S^\rho_\rho$: $Q(\rho) = H(I_\rho |S^\rho_\rho)$. 
To establish that $Q$ gives the asymptotic rate of distillation exactly, we will employ the family of monotones defined as \cite{GrandTour}
\begin{equation}\begin{aligned}
	\mu_k(\rho)\coloneqq \max_{I\subseteq [d],\, |I|\leq k} \log \left\|\Pi_I \Delta(\rho)^{-1/2}\rho\,\Delta(\rho)^{-1/2} \Pi_I \right\|_\infty
\end{aligned}\end{equation}
where for $I\subseteq [d]$ we set $\Pi_I\coloneqq \sum_{i\in I} \ketbra{i}{i}$, and the inverse of $\Delta(\rho)$ is taken on the support. These functions can be thought of as a generalization of the previously introduced $\eta$, as $\mu_2(\rho) = \log(1+\eta(\rho))$. The proof proceeds by showing that by suitably smoothing the quantities $\mu_k$, they can be related with a family of smoothed conditional max entropies, which then can be related to $H(I_\rho |S^\rho_\rho)$ by establishing a tweaked asymptotic equipartition property. Using the monotonicity of the family $\mu_k$ under SIO, we can then show that in the limit of infinitely many copies of $\rho$ the achievable rates of distillation under SIO are constrained precisely as $C_{d,\SIO}(\rho) \leq Q(\rho)$. We refer to~\cite{Note1,GrandTour} for the complete technical details of the proof.
\end{proof}

\textbf{\em Conclusions.}--- We fully characterized the problem of asymptotic distillability of quantum coherence under strictly incoherent operations (SIO) and physically incoherent operations (PIO), analytically computing the maximal asymptotic distillation rates and showing that they coincide on all states. We showed that almost all states --- with the sole exception of states whose density matrix contains a rank-one submatrix --- are bound coherent. A new SIO monotone, the maximal coherence $\eta$, plays a crucial role in forming a necessary and sufficient criterion for distillability. We furthermore derived a computable SDP expression for the fidelity of one-shot distillation of a coherence bit under SIO and evaluated it in the asymptotic many-copy limit in terms of the monotone $\eta$.

Our results reveal that, despite being as useful as the larger classes of free operations IO, DIO, and MIO in some tasks, the operational capabilities of SIO and PIO are limited in the context of coherence distillation. This a priori unexpected conclusion was not suggested by any previous work, and bears a notable impact on practical applications, which often require the use of coherence in pure, distilled form~\cite{yuan_2015, streltsov_2016, anshu_2018, diaz_2018}. For those states that happen to be SIO distillable, we constructed a protocol to perform optimal distillation, that should be easily implementable as it requires only incoherent ancillae, incoherent unitaries, and incoherent measurements.

\begin{acknowledgments}
We note the similarity of our main result to Ref.~\cite{marvian_2018}, where a generic phenomenon of bound coherence was also found in the related resource theory of unspeakable coherence (a.k.a.~asymmetry) with respect to the set of translationally-covariant incoherent operations~\cite{marvian_2016}; however, it does not appear possible to make this qualitative correspondence also quantitative, as the two settings are fundamentally different.

In light of the considerations of our work and the exposed weakness of SIO in performing coherence distillation, it remains an important open question to understand what the smallest physically-motivated set of free operations for manipulating coherence without such hindering operational limitations could be, and hence the ongoing quest for a satisfactory resource theory of coherence~\cite{streltsov_2017} becomes even more enthralling.

\textbf{\em Acknowledgments.}--- We are grateful to Iman Marvian, Alexander Streltsov, and Andreas Winter for useful discussions, and to the authors of~\cite{zhao_2018-1} for sharing a preliminary draft of their work with us. We thank the Isaac Newton Institute for Mathematical Sciences for support and hospitality during the programme `Beyond I.I.D. in Information Theory' when part of the work on this paper was undertaken. We acknowledge financial support from the European Research Council (ERC) under the Starting Grant GQCOP (Grant No.~637352).
\end{acknowledgments}

\bibliographystyle{apsrev4-1}
\bibliography{main}

\begin{thebibliography}{49}%
\makeatletter
\providecommand \@ifxundefined [1]{%
 \@ifx{#1\undefined}
}%
\providecommand \@ifnum [1]{%
 \ifnum #1\expandafter \@firstoftwo
 \else \expandafter \@secondoftwo
 \fi
}%
\providecommand \@ifx [1]{%
 \ifx #1\expandafter \@firstoftwo
 \else \expandafter \@secondoftwo
 \fi
}%
\providecommand \natexlab [1]{#1}%
\providecommand \enquote  [1]{``#1''}%
\providecommand \bibnamefont  [1]{#1}%
\providecommand \bibfnamefont [1]{#1}%
\providecommand \citenamefont [1]{#1}%
\providecommand \href@noop [0]{\@secondoftwo}%
\providecommand \href [0]{\begingroup \@sanitize@url \@href}%
\providecommand \@href[1]{\@@startlink{#1}\@@href}%
\providecommand \@@href[1]{\endgroup#1\@@endlink}%
\providecommand \@sanitize@url [0]{\catcode `\\12\catcode `\$12\catcode
  `\&12\catcode `\#12\catcode `\^12\catcode `\_12\catcode `\%12\relax}%
\providecommand \@@startlink[1]{}%
\providecommand \@@endlink[0]{}%
\providecommand \url  [0]{\begingroup\@sanitize@url \@url }%
\providecommand \@url [1]{\endgroup\@href {#1}{\urlprefix }}%
\providecommand \urlprefix  [0]{URL }%
\providecommand \Eprint [0]{\href }%
\providecommand \doibase [0]{http://dx.doi.org/}%
\providecommand \selectlanguage [0]{\@gobble}%
\providecommand \bibinfo  [0]{\@secondoftwo}%
\providecommand \bibfield  [0]{\@secondoftwo}%
\providecommand \translation [1]{[#1]}%
\providecommand \BibitemOpen [0]{}%
\providecommand \bibitemStop [0]{}%
\providecommand \bibitemNoStop [0]{.\EOS\space}%
\providecommand \EOS [0]{\spacefactor3000\relax}%
\providecommand \BibitemShut  [1]{\csname bibitem#1\endcsname}%
\let\auto@bib@innerbib\@empty
\bibitem [{\citenamefont {Aberg}(2006)}]{aberg_2006}%
  \BibitemOpen
  \bibfield  {author} {\bibinfo {author} {\bibfnamefont {J.}~\bibnamefont
  {Aberg}},\ }\href {http://arxiv.org/abs/quant-ph/0612146} {\  (\bibinfo
  {year} {2006})},\ \Eprint {http://arxiv.org/abs/quant-ph/0612146}
  {arXiv:quant-ph/0612146} \BibitemShut {NoStop}%
\bibitem [{\citenamefont {Baumgratz}\ \emph {et~al.}(2014)\citenamefont
  {Baumgratz}, \citenamefont {Cramer},\ and\ \citenamefont
  {Plenio}}]{baumgratz_2014}%
  \BibitemOpen
  \bibfield  {author} {\bibinfo {author} {\bibfnamefont {T.}~\bibnamefont
  {Baumgratz}}, \bibinfo {author} {\bibfnamefont {M.}~\bibnamefont {Cramer}}, \
  and\ \bibinfo {author} {\bibfnamefont {M.~B.}\ \bibnamefont {Plenio}},\
  }\href {\doibase 10.1103/PhysRevLett.113.140401} {\bibfield  {journal}
  {\bibinfo  {journal} {Phys. Rev. Lett.}\ }\textbf {\bibinfo {volume} {113}},\
  \bibinfo {pages} {140401} (\bibinfo {year} {2014})}\BibitemShut {NoStop}%
\bibitem [{\citenamefont {Winter}\ and\ \citenamefont
  {Yang}(2016)}]{winter_2016}%
  \BibitemOpen
  \bibfield  {author} {\bibinfo {author} {\bibfnamefont {A.}~\bibnamefont
  {Winter}}\ and\ \bibinfo {author} {\bibfnamefont {D.}~\bibnamefont {Yang}},\
  }\href {\doibase 10.1103/PhysRevLett.116.120404} {\bibfield  {journal}
  {\bibinfo  {journal} {Phys. Rev. Lett.}\ }\textbf {\bibinfo {volume} {116}},\
  \bibinfo {pages} {120404} (\bibinfo {year} {2016})}\BibitemShut {NoStop}%
\bibitem [{\citenamefont {Streltsov}\ \emph
  {et~al.}(2017{\natexlab{a}})\citenamefont {Streltsov}, \citenamefont
  {Adesso},\ and\ \citenamefont {Plenio}}]{streltsov_2017}%
  \BibitemOpen
  \bibfield  {author} {\bibinfo {author} {\bibfnamefont {A.}~\bibnamefont
  {Streltsov}}, \bibinfo {author} {\bibfnamefont {G.}~\bibnamefont {Adesso}}, \
  and\ \bibinfo {author} {\bibfnamefont {M.~B.}\ \bibnamefont {Plenio}},\
  }\href {\doibase 10.1103/RevModPhys.89.041003} {\bibfield  {journal}
  {\bibinfo  {journal} {Rev. Mod. Phys.}\ }\textbf {\bibinfo {volume} {89}},\
  \bibinfo {pages} {041003} (\bibinfo {year} {2017}{\natexlab{a}})}\BibitemShut
  {NoStop}%
\bibitem [{\citenamefont {Horodecki}\ and\ \citenamefont
  {Oppenheim}(2012)}]{horodecki_2012}%
  \BibitemOpen
  \bibfield  {author} {\bibinfo {author} {\bibfnamefont {M.}~\bibnamefont
  {Horodecki}}\ and\ \bibinfo {author} {\bibfnamefont {J.}~\bibnamefont
  {Oppenheim}},\ }\href {\doibase 10.1142/S0217979213450197} {\bibfield
  {journal} {\bibinfo  {journal} {Int. J. Mod. Phys. B}\ }\textbf {\bibinfo
  {volume} {27}},\ \bibinfo {pages} {1345019} (\bibinfo {year}
  {2012})}\BibitemShut {NoStop}%
\bibitem [{\citenamefont {{del Rio}}\ \emph {et~al.}(2015)\citenamefont {{del
  Rio}}, \citenamefont {Kraemer},\ and\ \citenamefont {Renner}}]{delrio_2015}%
  \BibitemOpen
  \bibfield  {author} {\bibinfo {author} {\bibfnamefont {L.}~\bibnamefont {{del
  Rio}}}, \bibinfo {author} {\bibfnamefont {L.}~\bibnamefont {Kraemer}}, \ and\
  \bibinfo {author} {\bibfnamefont {R.}~\bibnamefont {Renner}},\ }\href@noop {}
  {\  (\bibinfo {year} {2015})},\ \Eprint {http://arxiv.org/abs/1511.08818}
  {arXiv:1511.08818} \BibitemShut {NoStop}%
\bibitem [{\citenamefont {Coecke}\ \emph {et~al.}(2016)\citenamefont {Coecke},
  \citenamefont {Fritz},\ and\ \citenamefont {Spekkens}}]{coecke_2016}%
  \BibitemOpen
  \bibfield  {author} {\bibinfo {author} {\bibfnamefont {B.}~\bibnamefont
  {Coecke}}, \bibinfo {author} {\bibfnamefont {T.}~\bibnamefont {Fritz}}, \
  and\ \bibinfo {author} {\bibfnamefont {R.~W.}\ \bibnamefont {Spekkens}},\
  }\href {\doibase 10.1016/j.ic.2016.02.008} {\bibfield  {journal} {\bibinfo
  {journal} {Inf. Comput.}\ }\textbf {\bibinfo {volume} {250}},\ \bibinfo
  {pages} {59} (\bibinfo {year} {2016})}\BibitemShut {NoStop}%
\bibitem [{\citenamefont {Chitambar}\ and\ \citenamefont
  {Gour}(2018)}]{chitambar_2018-1}%
  \BibitemOpen
  \bibfield  {author} {\bibinfo {author} {\bibfnamefont {E.}~\bibnamefont
  {Chitambar}}\ and\ \bibinfo {author} {\bibfnamefont {G.}~\bibnamefont
  {Gour}},\ }\href {http://arxiv.org/abs/1806.06107} {\  (\bibinfo {year}
  {2018})},\ \Eprint {http://arxiv.org/abs/1806.06107} {arXiv:1806.06107}
  \BibitemShut {NoStop}%
\bibitem [{\citenamefont {Horodecki}\ \emph {et~al.}(2009)\citenamefont
  {Horodecki}, \citenamefont {Horodecki}, \citenamefont {Horodecki},\ and\
  \citenamefont {Horodecki}}]{horodecki_2009}%
  \BibitemOpen
  \bibfield  {author} {\bibinfo {author} {\bibfnamefont {R.}~\bibnamefont
  {Horodecki}}, \bibinfo {author} {\bibfnamefont {P.}~\bibnamefont
  {Horodecki}}, \bibinfo {author} {\bibfnamefont {M.}~\bibnamefont
  {Horodecki}}, \ and\ \bibinfo {author} {\bibfnamefont {K.}~\bibnamefont
  {Horodecki}},\ }\href {\doibase 10.1103/RevModPhys.81.865} {\bibfield
  {journal} {\bibinfo  {journal} {Rev. Mod. Phys.}\ }\textbf {\bibinfo {volume}
  {81}},\ \bibinfo {pages} {865} (\bibinfo {year} {2009})}\BibitemShut
  {NoStop}%
\bibitem [{\citenamefont {Du}\ \emph {et~al.}(2015{\natexlab{a}})\citenamefont
  {Du}, \citenamefont {Bai},\ and\ \citenamefont {Qi}}]{du_2015}%
  \BibitemOpen
  \bibfield  {author} {\bibinfo {author} {\bibfnamefont {S.}~\bibnamefont
  {Du}}, \bibinfo {author} {\bibfnamefont {Z.}~\bibnamefont {Bai}}, \ and\
  \bibinfo {author} {\bibfnamefont {X.}~\bibnamefont {Qi}},\ }\href
  {http://dl.acm.org/citation.cfm?id=2871378.2871381} {\bibfield  {journal}
  {\bibinfo  {journal} {Quantum Inf Comput}\ }\textbf {\bibinfo {volume}
  {15}},\ \bibinfo {pages} {1307} (\bibinfo {year}
  {2015}{\natexlab{a}})}\BibitemShut {NoStop}%
\bibitem [{\citenamefont {Yuan}\ \emph {et~al.}(2015)\citenamefont {Yuan},
  \citenamefont {Zhou}, \citenamefont {Cao},\ and\ \citenamefont
  {Ma}}]{yuan_2015}%
  \BibitemOpen
  \bibfield  {author} {\bibinfo {author} {\bibfnamefont {X.}~\bibnamefont
  {Yuan}}, \bibinfo {author} {\bibfnamefont {H.}~\bibnamefont {Zhou}}, \bibinfo
  {author} {\bibfnamefont {Z.}~\bibnamefont {Cao}}, \ and\ \bibinfo {author}
  {\bibfnamefont {X.}~\bibnamefont {Ma}},\ }\href {\doibase
  10.1103/PhysRevA.92.022124} {\bibfield  {journal} {\bibinfo  {journal} {Phys.
  Rev. A}\ }\textbf {\bibinfo {volume} {92}},\ \bibinfo {pages} {022124}
  (\bibinfo {year} {2015})}\BibitemShut {NoStop}%
\bibitem [{\citenamefont {Yadin}\ \emph {et~al.}(2016)\citenamefont {Yadin},
  \citenamefont {Ma}, \citenamefont {Girolami}, \citenamefont {Gu},\ and\
  \citenamefont {Vedral}}]{yadin_2016}%
  \BibitemOpen
  \bibfield  {author} {\bibinfo {author} {\bibfnamefont {B.}~\bibnamefont
  {Yadin}}, \bibinfo {author} {\bibfnamefont {J.}~\bibnamefont {Ma}}, \bibinfo
  {author} {\bibfnamefont {D.}~\bibnamefont {Girolami}}, \bibinfo {author}
  {\bibfnamefont {M.}~\bibnamefont {Gu}}, \ and\ \bibinfo {author}
  {\bibfnamefont {V.}~\bibnamefont {Vedral}},\ }\href {\doibase
  10.1103/PhysRevX.6.041028} {\bibfield  {journal} {\bibinfo  {journal} {Phys.
  Rev. X}\ }\textbf {\bibinfo {volume} {6}},\ \bibinfo {pages} {041028}
  (\bibinfo {year} {2016})}\BibitemShut {NoStop}%
\bibitem [{\citenamefont {Chitambar}\ and\ \citenamefont
  {Gour}(2016{\natexlab{a}})}]{chitambar_2016}%
  \BibitemOpen
  \bibfield  {author} {\bibinfo {author} {\bibfnamefont {E.}~\bibnamefont
  {Chitambar}}\ and\ \bibinfo {author} {\bibfnamefont {G.}~\bibnamefont
  {Gour}},\ }\href {\doibase 10.1103/PhysRevLett.117.030401} {\bibfield
  {journal} {\bibinfo  {journal} {Phys. Rev. Lett.}\ }\textbf {\bibinfo
  {volume} {117}},\ \bibinfo {pages} {030401} (\bibinfo {year}
  {2016}{\natexlab{a}})}\BibitemShut {NoStop}%
\bibitem [{\citenamefont {Marvian}\ and\ \citenamefont
  {Spekkens}(2016)}]{marvian_2016}%
  \BibitemOpen
  \bibfield  {author} {\bibinfo {author} {\bibfnamefont {I.}~\bibnamefont
  {Marvian}}\ and\ \bibinfo {author} {\bibfnamefont {R.~W.}\ \bibnamefont
  {Spekkens}},\ }\href {\doibase 10.1103/PhysRevA.94.052324} {\bibfield
  {journal} {\bibinfo  {journal} {Phys. Rev. A}\ }\textbf {\bibinfo {volume}
  {94}},\ \bibinfo {pages} {052324} (\bibinfo {year} {2016})}\BibitemShut
  {NoStop}%
\bibitem [{\citenamefont {Chitambar}\ and\ \citenamefont
  {Gour}(2016{\natexlab{b}})}]{chitambar_2016-1}%
  \BibitemOpen
  \bibfield  {author} {\bibinfo {author} {\bibfnamefont {E.}~\bibnamefont
  {Chitambar}}\ and\ \bibinfo {author} {\bibfnamefont {G.}~\bibnamefont
  {Gour}},\ }\href {\doibase 10.1103/PhysRevA.94.052336} {\bibfield  {journal}
  {\bibinfo  {journal} {Phys. Rev. A}\ }\textbf {\bibinfo {volume} {94}},\
  \bibinfo {pages} {052336} (\bibinfo {year} {2016}{\natexlab{b}})}\BibitemShut
  {NoStop}%
\bibitem [{\citenamefont {{de Vicente}}\ and\ \citenamefont
  {Streltsov}(2017)}]{vicente_2017}%
  \BibitemOpen
  \bibfield  {author} {\bibinfo {author} {\bibfnamefont {J.~I.}\ \bibnamefont
  {{de Vicente}}}\ and\ \bibinfo {author} {\bibfnamefont {A.}~\bibnamefont
  {Streltsov}},\ }\href {\doibase 10.1088/1751-8121/50/4/045301} {\bibfield
  {journal} {\bibinfo  {journal} {J. Phys. A: Math. Theor.}\ }\textbf {\bibinfo
  {volume} {50}},\ \bibinfo {pages} {045301} (\bibinfo {year}
  {2017})}\BibitemShut {NoStop}%
\bibitem [{\citenamefont {Liu}\ \emph {et~al.}(2017)\citenamefont {Liu},
  \citenamefont {Hu},\ and\ \citenamefont {Lloyd}}]{liu_2017}%
  \BibitemOpen
  \bibfield  {author} {\bibinfo {author} {\bibfnamefont {Z.-W.}\ \bibnamefont
  {Liu}}, \bibinfo {author} {\bibfnamefont {X.}~\bibnamefont {Hu}}, \ and\
  \bibinfo {author} {\bibfnamefont {S.}~\bibnamefont {Lloyd}},\ }\href
  {\doibase 10.1103/PhysRevLett.118.060502} {\bibfield  {journal} {\bibinfo
  {journal} {Phys. Rev. Lett.}\ }\textbf {\bibinfo {volume} {118}},\ \bibinfo
  {pages} {060502} (\bibinfo {year} {2017})}\BibitemShut {NoStop}%
\bibitem [{\citenamefont {Streltsov}\ \emph
  {et~al.}(2017{\natexlab{b}})\citenamefont {Streltsov}, \citenamefont {Rana},
  \citenamefont {Boes},\ and\ \citenamefont {Eisert}}]{streltsov_2017-2}%
  \BibitemOpen
  \bibfield  {author} {\bibinfo {author} {\bibfnamefont {A.}~\bibnamefont
  {Streltsov}}, \bibinfo {author} {\bibfnamefont {S.}~\bibnamefont {Rana}},
  \bibinfo {author} {\bibfnamefont {P.}~\bibnamefont {Boes}}, \ and\ \bibinfo
  {author} {\bibfnamefont {J.}~\bibnamefont {Eisert}},\ }\href {\doibase
  10.1103/PhysRevLett.119.140402} {\bibfield  {journal} {\bibinfo  {journal}
  {Phys. Rev. Lett.}\ }\textbf {\bibinfo {volume} {119}},\ \bibinfo {pages}
  {140402} (\bibinfo {year} {2017}{\natexlab{b}})}\BibitemShut {NoStop}%
\bibitem [{\citenamefont {Zhao}\ \emph
  {et~al.}(2018{\natexlab{a}})\citenamefont {Zhao}, \citenamefont {Liu},
  \citenamefont {Yuan}, \citenamefont {Chitambar},\ and\ \citenamefont
  {Ma}}]{zhao_2018}%
  \BibitemOpen
  \bibfield  {author} {\bibinfo {author} {\bibfnamefont {Q.}~\bibnamefont
  {Zhao}}, \bibinfo {author} {\bibfnamefont {Y.}~\bibnamefont {Liu}}, \bibinfo
  {author} {\bibfnamefont {X.}~\bibnamefont {Yuan}}, \bibinfo {author}
  {\bibfnamefont {E.}~\bibnamefont {Chitambar}}, \ and\ \bibinfo {author}
  {\bibfnamefont {X.}~\bibnamefont {Ma}},\ }\href {\doibase
  10.1103/PhysRevLett.120.070403} {\bibfield  {journal} {\bibinfo  {journal}
  {Phys. Rev. Lett.}\ }\textbf {\bibinfo {volume} {120}},\ \bibinfo {pages}
  {070403} (\bibinfo {year} {2018}{\natexlab{a}})}\BibitemShut {NoStop}%
\bibitem [{\citenamefont {Regula}\ \emph
  {et~al.}(2018{\natexlab{a}})\citenamefont {Regula}, \citenamefont {Fang},
  \citenamefont {Wang},\ and\ \citenamefont {Adesso}}]{regula_2017}%
  \BibitemOpen
  \bibfield  {author} {\bibinfo {author} {\bibfnamefont {B.}~\bibnamefont
  {Regula}}, \bibinfo {author} {\bibfnamefont {K.}~\bibnamefont {Fang}},
  \bibinfo {author} {\bibfnamefont {X.}~\bibnamefont {Wang}}, \ and\ \bibinfo
  {author} {\bibfnamefont {G.}~\bibnamefont {Adesso}},\ }\href {\doibase
  10.1103/PhysRevLett.121.010401} {\bibfield  {journal} {\bibinfo  {journal}
  {Phys. Rev. Lett.}\ }\textbf {\bibinfo {volume} {121}},\ \bibinfo {pages}
  {010401} (\bibinfo {year} {2018}{\natexlab{a}})}\BibitemShut {NoStop}%
\bibitem [{\citenamefont {Chitambar}(2018)}]{chitambar_2018}%
  \BibitemOpen
  \bibfield  {author} {\bibinfo {author} {\bibfnamefont {E.}~\bibnamefont
  {Chitambar}},\ }\href {\doibase 10.1103/PhysRevA.97.050301} {\bibfield
  {journal} {\bibinfo  {journal} {Phys. Rev. A}\ }\textbf {\bibinfo {volume}
  {97}},\ \bibinfo {pages} {050301} (\bibinfo {year} {2018})}\BibitemShut
  {NoStop}%
\bibitem [{\citenamefont {Egloff}\ \emph {et~al.}(2018)\citenamefont {Egloff},
  \citenamefont {Matera}, \citenamefont {Theurer},\ and\ \citenamefont
  {Plenio}}]{egloff_2018}%
  \BibitemOpen
  \bibfield  {author} {\bibinfo {author} {\bibfnamefont {D.}~\bibnamefont
  {Egloff}}, \bibinfo {author} {\bibfnamefont {J.~M.}\ \bibnamefont {Matera}},
  \bibinfo {author} {\bibfnamefont {T.}~\bibnamefont {Theurer}}, \ and\
  \bibinfo {author} {\bibfnamefont {M.~B.}\ \bibnamefont {Plenio}},\ }\href
  {\doibase 10.1103/PhysRevX.8.031005} {\bibfield  {journal} {\bibinfo
  {journal} {Phys. Rev. X}\ }\textbf {\bibinfo {volume} {8}},\ \bibinfo {pages}
  {031005} (\bibinfo {year} {2018})}\BibitemShut {NoStop}%
\bibitem [{\citenamefont {Fang}\ \emph {et~al.}(2018)\citenamefont {Fang},
  \citenamefont {Wang}, \citenamefont {Lami}, \citenamefont {Regula},\ and\
  \citenamefont {Adesso}}]{fang_2018}%
  \BibitemOpen
  \bibfield  {author} {\bibinfo {author} {\bibfnamefont {K.}~\bibnamefont
  {Fang}}, \bibinfo {author} {\bibfnamefont {X.}~\bibnamefont {Wang}}, \bibinfo
  {author} {\bibfnamefont {L.}~\bibnamefont {Lami}}, \bibinfo {author}
  {\bibfnamefont {B.}~\bibnamefont {Regula}}, \ and\ \bibinfo {author}
  {\bibfnamefont {G.}~\bibnamefont {Adesso}},\ }\href {\doibase
  10.1103/PhysRevLett.121.070404} {\bibfield  {journal} {\bibinfo  {journal}
  {Phys. Rev. Lett.}\ }\textbf {\bibinfo {volume} {121}},\ \bibinfo {pages}
  {070404} (\bibinfo {year} {2018})}\BibitemShut {NoStop}%
\bibitem [{\citenamefont {Theurer}\ \emph {et~al.}(2018)\citenamefont
  {Theurer}, \citenamefont {Egloff}, \citenamefont {Zhang},\ and\ \citenamefont
  {Plenio}}]{theurer_2018}%
  \BibitemOpen
  \bibfield  {author} {\bibinfo {author} {\bibfnamefont {T.}~\bibnamefont
  {Theurer}}, \bibinfo {author} {\bibfnamefont {D.}~\bibnamefont {Egloff}},
  \bibinfo {author} {\bibfnamefont {L.}~\bibnamefont {Zhang}}, \ and\ \bibinfo
  {author} {\bibfnamefont {M.~B.}\ \bibnamefont {Plenio}},\ }\href
  {http://arxiv.org/abs/1806.07332} {\  (\bibinfo {year} {2018})},\ \Eprint
  {http://arxiv.org/abs/1806.07332} {arXiv:1806.07332} \BibitemShut {NoStop}%
\bibitem [{\citenamefont {Regula}\ \emph
  {et~al.}(2018{\natexlab{b}})\citenamefont {Regula}, \citenamefont {Lami},\
  and\ \citenamefont {Streltsov}}]{regula_2018-1}%
  \BibitemOpen
  \bibfield  {author} {\bibinfo {author} {\bibfnamefont {B.}~\bibnamefont
  {Regula}}, \bibinfo {author} {\bibfnamefont {L.}~\bibnamefont {Lami}}, \ and\
  \bibinfo {author} {\bibfnamefont {A.}~\bibnamefont {Streltsov}},\ }\href
  {\doibase 10.1103/PhysRevA.98.052329} {\bibfield  {journal} {\bibinfo
  {journal} {Phys. Rev. A}\ }\textbf {\bibinfo {volume} {98}},\ \bibinfo
  {pages} {052329} (\bibinfo {year} {2018}{\natexlab{b}})}\BibitemShut
  {NoStop}%
\bibitem [{\citenamefont {Zhao}\ \emph
  {et~al.}(2018{\natexlab{b}})\citenamefont {Zhao}, \citenamefont {Liu},
  \citenamefont {Yuan}, \citenamefont {Chitambar},\ and\ \citenamefont
  {Winter}}]{zhao_2018-1}%
  \BibitemOpen
  \bibfield  {author} {\bibinfo {author} {\bibfnamefont {Q.}~\bibnamefont
  {Zhao}}, \bibinfo {author} {\bibfnamefont {Y.}~\bibnamefont {Liu}}, \bibinfo
  {author} {\bibfnamefont {X.}~\bibnamefont {Yuan}}, \bibinfo {author}
  {\bibfnamefont {E.}~\bibnamefont {Chitambar}}, \ and\ \bibinfo {author}
  {\bibfnamefont {A.}~\bibnamefont {Winter}},\ }\href
  {http://arxiv.org/abs/1808.01885v2} {\  (\bibinfo {year}
  {2018}{\natexlab{b}})},\ \Eprint {http://arxiv.org/abs/1808.01885v2}
  {arXiv:1808.01885v2} \BibitemShut {NoStop}%
\bibitem [{\citenamefont {Biswas}\ \emph {et~al.}(2017)\citenamefont {Biswas},
  \citenamefont {Garc{\'\i}a~D{\'\i}az},\ and\ \citenamefont
  {Winter}}]{biswas_2017}%
  \BibitemOpen
  \bibfield  {author} {\bibinfo {author} {\bibfnamefont {T.}~\bibnamefont
  {Biswas}}, \bibinfo {author} {\bibfnamefont {M.}~\bibnamefont
  {Garc{\'\i}a~D{\'\i}az}}, \ and\ \bibinfo {author} {\bibfnamefont
  {A.}~\bibnamefont {Winter}},\ }\href {\doibase 10.1098/rspa.2017.0170}
  {\bibfield  {journal} {\bibinfo  {journal} {Proc. Roy. Soc. London A}\
  }\textbf {\bibinfo {volume} {473}},\ \bibinfo {pages} {2203} (\bibinfo {year}
  {2017})}\BibitemShut {NoStop}%
\bibitem [{\citenamefont {Du}\ \emph {et~al.}(2015{\natexlab{b}})\citenamefont
  {Du}, \citenamefont {Bai},\ and\ \citenamefont {Guo}}]{du_2015-1}%
  \BibitemOpen
  \bibfield  {author} {\bibinfo {author} {\bibfnamefont {S.}~\bibnamefont
  {Du}}, \bibinfo {author} {\bibfnamefont {Z.}~\bibnamefont {Bai}}, \ and\
  \bibinfo {author} {\bibfnamefont {Y.}~\bibnamefont {Guo}},\ }\href {\doibase
  10.1103/PhysRevA.91.052120} {\bibfield  {journal} {\bibinfo  {journal} {Phys.
  Rev. A}\ }\textbf {\bibinfo {volume} {91}},\ \bibinfo {pages} {052120}
  (\bibinfo {year} {2015}{\natexlab{b}})}\BibitemShut {NoStop}%
\bibitem [{\citenamefont {Ben~Dana}\ \emph {et~al.}(2017)\citenamefont
  {Ben~Dana}, \citenamefont {García~Díaz}, \citenamefont {Mejatty},\ and\
  \citenamefont {Winter}}]{bendana_2017}%
  \BibitemOpen
  \bibfield  {author} {\bibinfo {author} {\bibfnamefont {K.}~\bibnamefont
  {Ben~Dana}}, \bibinfo {author} {\bibfnamefont {M.}~\bibnamefont
  {García~Díaz}}, \bibinfo {author} {\bibfnamefont {M.}~\bibnamefont
  {Mejatty}}, \ and\ \bibinfo {author} {\bibfnamefont {A.}~\bibnamefont
  {Winter}},\ }\href {\doibase 10.1103/PhysRevA.95.062327} {\bibfield
  {journal} {\bibinfo  {journal} {Phys. Rev. A}\ }\textbf {\bibinfo {volume}
  {95}},\ \bibinfo {pages} {062327} (\bibinfo {year} {2017})}\BibitemShut
  {NoStop}%
\bibitem [{Note1()}]{Note1}%
  \BibitemOpen
  \bibinfo {note} {See the Supplemental Material, where we provide explicit
  proofs of some of the results discussed in the main text.}\BibitemShut
  {Stop}%
\bibitem [{\citenamefont {Hirschfeld}(1935)}]{Hirschfeld1935}%
  \BibitemOpen
  \bibfield  {author} {\bibinfo {author} {\bibfnamefont {H.~O.}\ \bibnamefont
  {Hirschfeld}},\ }\href {\doibase 10.1017/S0305004100013517} {\bibfield
  {journal} {\bibinfo  {journal} {Math. Proc. Camb. Philos. Soc.}\ }\textbf
  {\bibinfo {volume} {31}},\ \bibinfo {pages} {520} (\bibinfo {year}
  {1935})}\BibitemShut {NoStop}%
\bibitem [{\citenamefont {Gebelein}(1941)}]{Gebelein1941}%
  \BibitemOpen
  \bibfield  {author} {\bibinfo {author} {\bibfnamefont {H.}~\bibnamefont
  {Gebelein}},\ }\href {\doibase 10.1002/zamm.19410210604} {\bibfield
  {journal} {\bibinfo  {journal} {ZAMM Z. Angew. Math. Mech.}\ }\textbf
  {\bibinfo {volume} {21}},\ \bibinfo {pages} {364} (\bibinfo {year}
  {1941})}\BibitemShut {NoStop}%
\bibitem [{\citenamefont {Witsenhausen}(1975)}]{Witsenhausen1975}%
  \BibitemOpen
  \bibfield  {author} {\bibinfo {author} {\bibfnamefont {H.}~\bibnamefont
  {Witsenhausen}},\ }\href {\doibase 10.1137/0128010} {\bibfield  {journal}
  {\bibinfo  {journal} {SIAM J. Appl. Math.}\ }\textbf {\bibinfo {volume}
  {28}},\ \bibinfo {pages} {100} (\bibinfo {year} {1975})}\BibitemShut
  {NoStop}%
\bibitem [{\citenamefont {Beigi}(2013)}]{Beigi2013}%
  \BibitemOpen
  \bibfield  {author} {\bibinfo {author} {\bibfnamefont {S.}~\bibnamefont
  {Beigi}},\ }\href {\doibase 10.1063/1.4818985} {\bibfield  {journal}
  {\bibinfo  {journal} {J. Math. Phys.}\ }\textbf {\bibinfo {volume} {54}},\
  \bibinfo {pages} {082202} (\bibinfo {year} {2013})}\BibitemShut {NoStop}%
\bibitem [{\citenamefont {Delgosha}\ and\ \citenamefont
  {Beigi}(2014)}]{Delgosha2014}%
  \BibitemOpen
  \bibfield  {author} {\bibinfo {author} {\bibfnamefont {P.}~\bibnamefont
  {Delgosha}}\ and\ \bibinfo {author} {\bibfnamefont {S.}~\bibnamefont
  {Beigi}},\ }\href {\doibase 10.1007/s00220-014-2105-y} {\bibfield  {journal}
  {\bibinfo  {journal} {Commun. Math. Phys.}\ }\textbf {\bibinfo {volume}
  {332}},\ \bibinfo {pages} {449} (\bibinfo {year} {2014})}\BibitemShut
  {NoStop}%
\bibitem [{\citenamefont {Beigi}\ and\ \citenamefont
  {Gohari}(2015)}]{Beigi2015}%
  \BibitemOpen
  \bibfield  {author} {\bibinfo {author} {\bibfnamefont {S.}~\bibnamefont
  {Beigi}}\ and\ \bibinfo {author} {\bibfnamefont {A.}~\bibnamefont {Gohari}},\
  }\href@noop {} {\bibfield  {journal} {\bibinfo  {journal} {IEEE Trans. Inf.
  Theory}\ }\textbf {\bibinfo {volume} {61}},\ \bibinfo {pages} {5185}
  (\bibinfo {year} {2015})}\BibitemShut {NoStop}%
\bibitem [{\citenamefont {Beigi}\ and\ \citenamefont
  {Gohari}(2018)}]{Beigi2018}%
  \BibitemOpen
  \bibfield  {author} {\bibinfo {author} {\bibfnamefont {S.}~\bibnamefont
  {Beigi}}\ and\ \bibinfo {author} {\bibfnamefont {A.}~\bibnamefont {Gohari}},\
  }\href@noop {} {\bibfield  {journal} {\bibinfo  {journal} {IEEE Trans. Inf.
  Theory}\ }\textbf {\bibinfo {volume} {64}},\ \bibinfo {pages} {2193}
  (\bibinfo {year} {2018})}\BibitemShut {NoStop}%
\bibitem [{Note2()}]{Note2}%
  \BibitemOpen
  \bibinfo {note} {We note that the existence of a monotone obeying the
  tensorization property has also been used to study distillation of quantum
  resources in continuous variable systems~\cite
  {giedke_2002,lami_2018}.}\BibitemShut {Stop}%
\bibitem [{\citenamefont {Giedke}\ and\ \citenamefont
  {Ignacio~Cirac}(2002)}]{giedke_2002}%
  \BibitemOpen
  \bibfield  {author} {\bibinfo {author} {\bibfnamefont {G.}~\bibnamefont
  {Giedke}}\ and\ \bibinfo {author} {\bibfnamefont {J.}~\bibnamefont
  {Ignacio~Cirac}},\ }\href {\doibase 10.1103/PhysRevA.66.032316} {\bibfield
  {journal} {\bibinfo  {journal} {Phys. Rev. A}\ }\textbf {\bibinfo {volume}
  {66}},\ \bibinfo {pages} {032316} (\bibinfo {year} {2002})}\BibitemShut
  {NoStop}%
\bibitem [{\citenamefont {Lami}\ \emph {et~al.}(2018)\citenamefont {Lami},
  \citenamefont {Regula}, \citenamefont {Wang}, \citenamefont {Nichols},
  \citenamefont {Winter},\ and\ \citenamefont {Adesso}}]{lami_2018}%
  \BibitemOpen
  \bibfield  {author} {\bibinfo {author} {\bibfnamefont {L.}~\bibnamefont
  {Lami}}, \bibinfo {author} {\bibfnamefont {B.}~\bibnamefont {Regula}},
  \bibinfo {author} {\bibfnamefont {X.}~\bibnamefont {Wang}}, \bibinfo {author}
  {\bibfnamefont {R.}~\bibnamefont {Nichols}}, \bibinfo {author} {\bibfnamefont
  {A.}~\bibnamefont {Winter}}, \ and\ \bibinfo {author} {\bibfnamefont
  {G.}~\bibnamefont {Adesso}},\ }\href {\doibase 10.1103/PhysRevA.98.022335}
  {\bibfield  {journal} {\bibinfo  {journal} {Phys. Rev. A}\ }\textbf {\bibinfo
  {volume} {98}},\ \bibinfo {pages} {022335} (\bibinfo {year}
  {2018})}\BibitemShut {NoStop}%
\bibitem [{\citenamefont {Chin}(2017)}]{chin_2017}%
  \BibitemOpen
  \bibfield  {author} {\bibinfo {author} {\bibfnamefont {S.}~\bibnamefont
  {Chin}},\ }\href {\doibase 10.1103/PhysRevA.96.042336} {\bibfield  {journal}
  {\bibinfo  {journal} {Phys. Rev. A}\ }\textbf {\bibinfo {volume} {96}},\
  \bibinfo {pages} {042336} (\bibinfo {year} {2017})}\BibitemShut {NoStop}%
\bibitem [{\citenamefont {Regula}\ \emph
  {et~al.}(2018{\natexlab{c}})\citenamefont {Regula}, \citenamefont {Piani},
  \citenamefont {Cianciaruso}, \citenamefont {Bromley}, \citenamefont
  {Streltsov},\ and\ \citenamefont {Adesso}}]{regula_2018-2}%
  \BibitemOpen
  \bibfield  {author} {\bibinfo {author} {\bibfnamefont {B.}~\bibnamefont
  {Regula}}, \bibinfo {author} {\bibfnamefont {M.}~\bibnamefont {Piani}},
  \bibinfo {author} {\bibfnamefont {M.}~\bibnamefont {Cianciaruso}}, \bibinfo
  {author} {\bibfnamefont {T.~R.}\ \bibnamefont {Bromley}}, \bibinfo {author}
  {\bibfnamefont {A.}~\bibnamefont {Streltsov}}, \ and\ \bibinfo {author}
  {\bibfnamefont {G.}~\bibnamefont {Adesso}},\ }\href {\doibase
  10.1088/1367-2630/aaae9d} {\bibfield  {journal} {\bibinfo  {journal} {New J.
  Phys.}\ }\textbf {\bibinfo {volume} {20}},\ \bibinfo {pages} {033012}
  (\bibinfo {year} {2018}{\natexlab{c}})}\BibitemShut {NoStop}%
\bibitem [{\citenamefont {Ringbauer}\ \emph {et~al.}(2018)\citenamefont
  {Ringbauer}, \citenamefont {Bromley}, \citenamefont {Cianciaruso},
  \citenamefont {Lami}, \citenamefont {Lau}, \citenamefont {Adesso},
  \citenamefont {White}, \citenamefont {Fedrizzi},\ and\ \citenamefont
  {Piani}}]{ringbauer_2017}%
  \BibitemOpen
  \bibfield  {author} {\bibinfo {author} {\bibfnamefont {M.}~\bibnamefont
  {Ringbauer}}, \bibinfo {author} {\bibfnamefont {T.~R.}\ \bibnamefont
  {Bromley}}, \bibinfo {author} {\bibfnamefont {M.}~\bibnamefont
  {Cianciaruso}}, \bibinfo {author} {\bibfnamefont {L.}~\bibnamefont {Lami}},
  \bibinfo {author} {\bibfnamefont {W.~Y.~S.}\ \bibnamefont {Lau}}, \bibinfo
  {author} {\bibfnamefont {G.}~\bibnamefont {Adesso}}, \bibinfo {author}
  {\bibfnamefont {A.~G.}\ \bibnamefont {White}}, \bibinfo {author}
  {\bibfnamefont {A.}~\bibnamefont {Fedrizzi}}, \ and\ \bibinfo {author}
  {\bibfnamefont {M.}~\bibnamefont {Piani}},\ }\href {\doibase
  10.1103/PhysRevX.8.041007} {\bibfield  {journal} {\bibinfo  {journal} {Phys.
  Rev. X}\ }\textbf {\bibinfo {volume} {8}},\ \bibinfo {pages} {041007}
  (\bibinfo {year} {2018})}\BibitemShut {NoStop}%
\bibitem [{\citenamefont {Lami}(2019)}]{GrandTour}%
  \BibitemOpen
  \bibfield  {author} {\bibinfo {author} {\bibfnamefont {L.}~\bibnamefont
  {Lami}},\ }\href {https://arxiv.org/abs/1902.02427} {\  (\bibinfo {year}
  {2019})},\ \Eprint {http://arxiv.org/abs/1902.02427} {arXiv:1902.02427}
  \BibitemShut {NoStop}%
\bibitem [{\citenamefont {Streltsov}\ \emph {et~al.}(2016)\citenamefont
  {Streltsov}, \citenamefont {Chitambar}, \citenamefont {Rana}, \citenamefont
  {Bera}, \citenamefont {Winter},\ and\ \citenamefont
  {Lewenstein}}]{streltsov_2016}%
  \BibitemOpen
  \bibfield  {author} {\bibinfo {author} {\bibfnamefont {A.}~\bibnamefont
  {Streltsov}}, \bibinfo {author} {\bibfnamefont {E.}~\bibnamefont
  {Chitambar}}, \bibinfo {author} {\bibfnamefont {S.}~\bibnamefont {Rana}},
  \bibinfo {author} {\bibfnamefont {M.}~\bibnamefont {Bera}}, \bibinfo {author}
  {\bibfnamefont {A.}~\bibnamefont {Winter}}, \ and\ \bibinfo {author}
  {\bibfnamefont {M.}~\bibnamefont {Lewenstein}},\ }\href {\doibase
  10.1103/PhysRevLett.116.240405} {\bibfield  {journal} {\bibinfo  {journal}
  {Phys. Rev. Lett.}\ }\textbf {\bibinfo {volume} {116}},\ \bibinfo {pages}
  {240405} (\bibinfo {year} {2016})}\BibitemShut {NoStop}%
\bibitem [{\citenamefont {Anshu}\ \emph {et~al.}(2018)\citenamefont {Anshu},
  \citenamefont {Jain},\ and\ \citenamefont {Streltsov}}]{anshu_2018}%
  \BibitemOpen
  \bibfield  {author} {\bibinfo {author} {\bibfnamefont {A.}~\bibnamefont
  {Anshu}}, \bibinfo {author} {\bibfnamefont {R.}~\bibnamefont {Jain}}, \ and\
  \bibinfo {author} {\bibfnamefont {A.}~\bibnamefont {Streltsov}},\ }\href
  {http://arxiv.org/abs/1804.04915} {\  (\bibinfo {year} {2018})},\ \Eprint
  {http://arxiv.org/abs/1804.04915} {arXiv:1804.04915} \BibitemShut {NoStop}%
\bibitem [{\citenamefont {Díaz}\ \emph {et~al.}(2018)\citenamefont {Díaz},
  \citenamefont {Fang}, \citenamefont {Wang}, \citenamefont {Rosati},
  \citenamefont {Skotiniotis}, \citenamefont {Calsamiglia},\ and\ \citenamefont
  {Winter}}]{diaz_2018}%
  \BibitemOpen
  \bibfield  {author} {\bibinfo {author} {\bibfnamefont {M.~G.}\ \bibnamefont
  {Díaz}}, \bibinfo {author} {\bibfnamefont {K.}~\bibnamefont {Fang}},
  \bibinfo {author} {\bibfnamefont {X.}~\bibnamefont {Wang}}, \bibinfo {author}
  {\bibfnamefont {M.}~\bibnamefont {Rosati}}, \bibinfo {author} {\bibfnamefont
  {M.}~\bibnamefont {Skotiniotis}}, \bibinfo {author} {\bibfnamefont
  {J.}~\bibnamefont {Calsamiglia}}, \ and\ \bibinfo {author} {\bibfnamefont
  {A.}~\bibnamefont {Winter}},\ }\href {\doibase 10.22331/q-2018-10-19-100}
  {\bibfield  {journal} {\bibinfo  {journal} {Quantum}\ }\textbf {\bibinfo
  {volume} {2}},\ \bibinfo {pages} {100} (\bibinfo {year} {2018})}\BibitemShut
  {NoStop}%
\bibitem [{\citenamefont {Marvian}(2018)}]{marvian_2018}%
  \BibitemOpen
  \bibfield  {author} {\bibinfo {author} {\bibfnamefont {I.}~\bibnamefont
  {Marvian}},\ }\href {https://arxiv.org/abs/1805.01989} {\  (\bibinfo {year}
  {2018})},\ \Eprint {http://arxiv.org/abs/1805.01989} {arXiv:1805.01989}
  \BibitemShut {NoStop}%
\bibitem [{\citenamefont {Boyd}\ and\ \citenamefont
  {Vandenberghe}(2004)}]{boyd_2004}%
  \BibitemOpen
  \bibfield  {author} {\bibinfo {author} {\bibfnamefont {S.}~\bibnamefont
  {Boyd}}\ and\ \bibinfo {author} {\bibfnamefont {L.}~\bibnamefont
  {Vandenberghe}},\ }\href@noop {} {\emph {\bibinfo {title} {Convex
  {{Optimization}}}}}\ (\bibinfo  {publisher} {{Cambridge University Press}},\
  \bibinfo {address} {New York},\ \bibinfo {year} {2004})\BibitemShut {NoStop}%
\end{thebibliography}%

\clearpage
\onecolumngrid
\begin{center}
\vspace*{\baselineskip}
{\textbf{\large Supplemental Material: \\[3pt] Generic bound coherence under strictly incoherent operations}}\\[1pt] \quad \\
\end{center}

\renewcommand{\theequation}{S\arabic{equation}}
\setcounter{equation}{0}
\setcounter{figure}{0}
\setcounter{table}{0}
\setcounter{section}{0}
\setcounter{page}{1}
\makeatletter

\section{SIO and the case of different input and output dimensions}

As reported in the main text, an quantum channel $\Lambda$ acting on a $d$-dimensional system and outputting an $m$-dimensional system is defined to be an SIO if it admits a Kraus representation $\Lambda(\cdot) = \sum_\alpha K_\alpha (\cdot) K_\alpha^\dagger$ such that
\begin{equation}
K_\alpha \Delta(\rho) K_\alpha^\dagger = \Delta(K_\alpha \rho K_\alpha^\dagger) \qquad \forall \ \alpha, \ \forall\ \rho ,
\label{SIO Kraus commute dephasing}
\end{equation}
where $\Delta$ denotes the dephasing map acting on systems of the appropriate dimension. It is easy to verify that when $m=d$ the validity of Eq.~\eqref{SIO Kraus commute dephasing} entails that
\begin{align}
K_\alpha &= U_{\pi_\alpha}^\intercal D_\alpha , \label{Kraus decomposition 11} \\
U_{\pi_\alpha} &= \sum_{i=1}^d \ketbra{\pi_\alpha(i)}{i}, \label{Kraus decomposition 12} \\
D_\alpha &= \sum_{i=1}^d d_\alpha(i) \ketbra{i}{i} , \label{Kraus decomposition 13}
\end{align}
where the $\pi_\alpha$ are permutations. However, it is no longer possible to obtain this simple form when $m\neq d$. To see why, start by observing that each Kraus operator should now become a $m\times d$ rectangular matrix. In the case where $m\geq d$, it is still possible to represent $K_\alpha$ as in Eq.~\eqref{Kraus decomposition 11}, provided that one makes the sum in Eq.~\eqref{Kraus decomposition 12} run all the way to $m$. The opposite case $m\leq d$ can be treated in a similar way by exchanging the order of the product in Eq.~\eqref{Kraus decomposition 11}, which corresponds to applying the above procedure to $\Lambda^\dag$ instead of $\Lambda$.

It is very convenient to have a representation of the Kraus operators of an SIO operation, that is valid for all $d$ and $m$. This can be obtained in two different ways. On the one hand, we can write
\begin{align}
K_\alpha &= U_{\pi_\alpha}^\intercal D_\alpha , \label{Kraus decomposition 31} \\
U_{\pi_\alpha} &= \sum_{i\in J_\alpha} \ketbra{\pi_\alpha(i)}{i}, \label{Kraus decomposition 32}\\
D_\alpha &= \sum_{i\in J_\alpha} d_\alpha(i) \ketbra{i}{i} , \label{Kraus decomposition 33}
\end{align}
where $J_\alpha\subseteq \{1,\ldots, m\}$ are subsets and $\pi_\alpha:J_\alpha\to \{1,\ldots, d\}$ are injective functions. On the other hand, we can resort to the (non-unique) decomposition
\begin{equation}
K_\alpha = U_{\sigma_\alpha}^\intercal D_\alpha U_{\pi_\alpha}^\intercal ,
\label{Kraus decomposition 4}
\end{equation}
where $\pi_\alpha$ and $\sigma_\alpha$ are permutations acting on $\{1,\ldots, d\}$ and $\{1,\ldots, m\}$, respectively.

The careful reader will have noticed that in our proof of the monotonicity of the maximal coherence $\eta$ under SIO, we restricted ourselves to the case where input and output dimensions coincide. This is possible without loss of generality, because of the following ``lift and compress'' argument. Given an SIO channel $\Lambda$ that acts on a $d$-dimensional system and outputs an $m$-dimensional system, and taken some $d'\geq \max\{d,m\}$, construct the modified SIO channel $\Lambda'$ that acts on a $d'$-dimensional system and is defined by the formula
\begin{equation}
\Lambda'(\rho)\coloneqq \Pi_{m}^\intercal \Lambda\left( \Pi_{d} \rho \Pi_{d}^\intercal\right) \Pi_{m} ,
\label{Lambda tilde 1}
\end{equation}
where $\rho$ is $d'\times d'$, and $\Pi_r:\CC^{d'}\to \CC^r$ denotes the projector onto the subspace spanned by the first $r$ basis vector. Observe that from Eq.~\eqref{Lambda tilde 1} we can deduce the identity
\begin{equation}
\Lambda(\rho) = \Pi_{m} \Lambda'\left( \Pi_{d}^\intercal \rho \Pi_{d} \right) \Pi_{m}^\intercal ,
\label{Lambda tilde 2}
\end{equation}
valid for all $d\times d$ matrices $\rho$. Since in the main text we proved that $\eta$ is monotonic at least under strictly incoherent operations that do not change the input dimension, we know that it is monotonic in particular under $\Lambda'$. We now show how to deduce from this that it is also monotonic under $\Lambda$. Evaluating $\eta$ on both sides of Eq.~\eqref{Lambda tilde 2}, and using the elementary observation that $\eta\left( \Pi_r \rho\Pi_r^\intercal\right)\leq \eta(\rho)$, as an inspection of Eq.~\eqref{eta} immediately reveals, we deduce that
\begin{equation*}
\eta\left(\Lambda(\rho)\right) = \eta\left( \Pi_{m} \Lambda'\left( \Pi_{d}^\intercal \rho \Pi_{d} \right) \Pi_{m}^\intercal\right) \leq \eta\left( \Lambda'\left( \Pi_{d}^\intercal \rho \Pi_{d} \right) \right) \leq \eta\left( \Pi_{d}^\intercal \rho \Pi_{d} \right) = \eta(\rho) ,
\end{equation*}
for all $d\times d$ density matrices $\rho$.
This proves that the maximal coherence is in fact monotonic under general SIO operations.

\vspace{5ex}
\section{Fidelity of distillation under SIO}

In what follows, we will denote by $A\circ B$ the Hadamard (or Schur, or entrywise) product of two matrices $A$ and $B$ of the same size. Explicitly, we have that $(A\circ B)_{ij}\coloneqq A_{ij} B_{ij}$.

\begingroup
\renewcommand\thedefinition{\ref{thm:sdp}}
\begin{theorem}
The fidelity of distillation of a single bit of coherence is given by the semidefinite program
\begin{subequations}\label{eq:fidelity_sdp_s}
\begin{align}
	F_\SIO(\rho,2) &= \max_{\substack{-\id \leq X \leq \id\\\Delta(X)=0\\X \mgeq 0}} \frac{1}{2}\Big(\Tr |\rho| X + 1 \Big)\tag{\ref{eq:sdp_max}}
	\\&= \min_{\substack{D = \Delta(D)\\N \mgeq 0}} \frac{1}{2} \Big(\norm{|\rho| + D + N}{1} + 1 \Big).\tag{\ref{eq:sdp_min}}
	\end{align}
\end{subequations}
\end{theorem}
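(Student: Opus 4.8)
The plan is to derive both semidefinite programs from the general distillation-fidelity formula~\eqref{fidelity distillation SIO}: first specialise it to $m=2$ and convert the coherence-number constraint into an operator inequality, then reduce the resulting optimisation to one over an entrywise-nonnegative matrix to obtain~\eqref{eq:sdp_max}, and finally dualise to obtain~\eqref{eq:sdp_min}. Setting $m=2$ in~\eqref{fidelity distillation SIO} gives $F_\SIO(\rho,2)=\max\{\Tr\rho A\,:\,0\le A\le\id,\ \Delta(A)=\tfrac12\id,\ \CN(A)\le 2\}$, where $0\le A\le\id$ is meant in the L\"owner order. By~\cite[Thm.~1]{ringbauer_2017}, $\CN(A)\le 2$ is equivalent to $2\Delta(A)-|A|\ge 0$ as an operator inequality, with $|A|$ the entrywise modulus; since $\Delta(A)=\tfrac12\id$ on the feasible set, this becomes simply $\|\,|A|\,\|_\infty\le 1$. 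Hence $F_\SIO(\rho,2)=\max\{\Tr\rho A\,:\,0\le A\le\id,\ \Delta(A)=\tfrac12\id,\ \|\,|A|\,\|_\infty\le 1\}$, which is a clean starting point.

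Equation~\eqref{eq:sdp_max} will then follow from two matching inequalities. For the upper bound, take a feasible $A$; since $A_{ii}=\tfrac12$ one has $\Tr\rho A=\tfrac12+\Re\sum_{i\neq j}\rho_{ij}\overline{A_{ij}}\le\tfrac12+\sum_{i\neq j}|\rho_{ij}|\,|A_{ij}|$, and the matrix $X$ defined by $X_{ij}=2|A_{ij}|$ for $i\neq j$ and $X_{ii}=0$ is entrywise nonnegative with $\Delta(X)=0$; moreover $|A|=\tfrac12\id+\tfrac12 X$ entrywise, and since $|A|$ is entrywise-nonnegative symmetric with $\|\,|A|\,\|_\infty\le 1$, Perron--Frobenius gives $\lambda_{\max}(X)=\|X\|_\infty\le 1$, i.e.\ $-\id\le X\le\id$; this yields $\Tr\rho A\le\tfrac12(1+\Tr|\rho|X)$. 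For the lower bound, take any $X$ feasible for~\eqref{eq:sdp_max}, let $P$ be the Hermitian phase matrix of $\rho$ (entrywise $P_{ij}=\rho_{ij}/|\rho_{ij}|$ where $\rho_{ij}\neq 0$, and $|P_{ij}|\le 1$ with $P_{ii}=1$ elsewhere), and set $A=\tfrac12(\id+P\circ X)$. Then $\Delta(A)=\tfrac12\id$ automatically and $\Tr\rho A=\tfrac12(1+\Tr|\rho|X)$ by the choice of phases. The remaining two constraints follow from the elementary bound $\|P\circ X\|_\infty\le\|X\|_\infty$, valid for any Hermitian $P$ with $|P_{ij}|\le 1$ and any entrywise-nonnegative symmetric $X$: for a unit vector $v$, $|v^\dagger(P\circ X)v|\le\sum_{ij}|v_i|\,X_{ij}\,|v_j|\le\lambda_{\max}(X)=\|X\|_\infty$, the middle step being the Rayleigh bound for the nonnegative vector $(|v_i|)_i$ and the last equality Perron--Frobenius. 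Since $\|X\|_\infty\le 1$ this gives $\|P\circ X\|_\infty\le 1$ and hence $0\le A\le\id$; and entrywise $|A|\le\tfrac12\id+\tfrac12 X$, so by the same monotonicity $\|\,|A|\,\|_\infty\le\tfrac12(1+\|X\|_\infty)\le 1$, i.e.\ $2\Delta(A)-|A|\ge 0$ and $\CN(A)\le 2$. Thus $A$ is feasible, completing~\eqref{eq:sdp_max}.

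Equation~\eqref{eq:sdp_min} is then obtained as the Lagrange dual of~\eqref{eq:sdp_max}. Viewing~\eqref{eq:sdp_max} as a conic program over a real symmetric variable $X$ with constraints $X_{ij}\ge 0$ (all $i,j$), $\Delta(X)=0$, $\id-X\ge 0$, and $\id+X\ge 0$ (the last two in the L\"owner order), I attach an entrywise-nonnegative symmetric multiplier $N$, a diagonal multiplier $D$, and positive semidefinite slacks $W_+,W_-$ to the two operator inequalities. Eliminating $X$ from the Lagrangian forces $W_+-W_-=\tfrac12|\rho|+N+D$, and minimising $\Tr(W_++W_-)$ over $W_\pm\ge 0$ with this fixed difference equals $\norm{\tfrac12|\rho|+N+D}{1}$ (attained at the positive/negative parts); rescaling $N$ and $D$ then reproduces exactly~\eqref{eq:sdp_min}. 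Strong duality holds because $X=0$ strictly satisfies the semidefinite constraints (Slater) and both programs are bounded.

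The one genuinely non-routine step is the lower-bound half of the reduction to~\eqref{eq:sdp_max}: one must produce an optimisation variable $A$ that cancels the off-diagonal phases of $\rho$ against $X$ while simultaneously respecting $A\ge 0$, $A\le\id$, and $\CN(A)\le 2$. What makes this work is precisely the Schur-multiplier inequality $\|P\circ X\|_\infty\le\|X\|_\infty$ for phase matrices $P$, whose short proof rests on Perron--Frobenius: the extremal eigenvector of a nonnegative matrix may be taken nonnegative, so multiplying entries by unimodular phases cannot increase the operator norm. A minor point to track is the set of entries with $\rho_{ij}=0$, where the phase is undefined; these may be assigned arbitrarily (say $P_{ij}=1$) since they carry zero weight in $\Tr|\rho|X$.
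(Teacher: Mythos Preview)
Your proof is correct and follows essentially the same route as the paper: both start from Eq.~\eqref{fidelity distillation SIO}, invoke \cite[Thm.~1]{ringbauer_2017} to turn $\CN(A)\le 2$ into $|A|\le\id$, use Perron--Frobenius to show this subsumes $0\le A\le\id$, and then separate off the phases of $\rho$ to reduce to an optimisation over the entrywise-nonnegative $X$; the dual is obtained by standard SDP duality with Slater. The only cosmetic difference is that the paper parametrises $A=\tfrac12(\id+X\circ\omega)$ and maximises over the phase matrix $\omega$ first, whereas you organise the same computation as matching upper and lower bounds.
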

\endgroup

\begin{proof}
We start by recalling the expression for the fidelity of coherence distillation under SIO given in~\cite[Thm.~10]{zhao_2018-1}:
\begin{equation}
F_\SIO(\rho,m) = \max \bigg\{ \Tr[\rho A] \ \big|\,  0 \leq A \leq \id,\, \Delta(A) =\frac{\id}{m},\, \CN(A) \leq m \bigg\} .
\tag{\ref{fidelity distillation SIO}}
\end{equation}
Since we are interested in distilling a single bit of coherence, we set $m=2$ hereafter. Introducing the alternative parametrization $A = \frac{\id + Y}{2}$, Eq.~\eqref{fidelity distillation SIO} can be expressed as the maximization of the function $\frac12(\Tr(\rho Y) + 1)$ subject to the constraints $-\id \leq Y\leq \id$ and $\Delta(Y)=0$. The remaining condition $\CN(A)\leq 2$ can be imposed by means of \cite[Thm.~1]{ringbauer_2017}, which states that a positive semidefinite matrix $A\geq 0$ satisfies $\CN(A)\leq 2$ if and only if the matrix $2\Delta(A) - |A|$ is positive semidefinite. We thus have
\begin{equation}\begin{aligned}
	0 \leq 2\Delta(A) - |A| = \id - \frac{\id + |Y|}{2} = \frac{\id - |Y|}{2}\, ,
\end{aligned}\end{equation}
i.e.\ $|Y|\leq \id$. Now, we want to argue that this latter condition automatically implies that $-\id \leq Y\leq \id$, or equivalently that $\|Y\|_\infty\leq 1$, which makes this constraint superfluous. To see why, write $\|Y\|_\infty \leq \big\| |Y|\big\|_\infty = \lambda_{\max}(|Y|) \leq 1$, where the steps are justified as follows: the first inequality is well-known, and can be explicitly seen to hold by writing
\begin{align*}
\|Y\|_\infty &= \max_{\|v\|_2=1} |v^\dag Y v| = \max_{\|v\|_2=1} \left| \sum\nolimits_{i,j} v_i^* v_j Y_{ij} \right| \\
&\leq \max_{\|v\|_2=1} \sum\nolimits_{i,j} |v_i| |v_j|\, |Y_{ij}|\\
&= \max_{\|w\|_2=1,\, w\succeq 0} w^\dag |Y| w \leq \max_{\|w\|_2=1} w^\dag |Y| w = \big\| |Y|\big\|_\infty\, ;
\end{align*}
the middle equality follows from the Perron--Frobenius theorem, which implies that the spectral radius of every entrywise nonnegative matrix is itself an eigenvalue, which then by the hermiticity of $|Y|$ coincides with its operator norm; finally, the last inequality is a consequence of the assumption that $|Y|\leq \id$.

Putting everything together, we see that $Y$ is only subjected to the two constraints $\Delta(Y)=0$ and $|Y|\leq \id$ (equivalently, $-\id \leq |Y|\leq \id$). We can thus parametrize $Y=X\circ \omega$, where $X\coloneqq |Y|$ satisfies $\Delta(X)=0$, $X\succeq 0$ and $X\leq \id$ (equivalently, $-\id \leq X\leq \id$), while $\omega=\omega^\dag$ is \emph{any} Hermitian matrix composed only of phases (complex numbers of unit modulus). Since the objective function takes the form
\begin{equation}\begin{aligned}
	\Tr[\rho A] = \frac12 \left( 1 + \sum\nolimits_{i,j} \rho_{ij} Y_{ji} \right) = \frac12 \left( 1 + \sum\nolimits_{i,j} \omega_{ji}\, \rho_{ij} X_{ji} \right) ,
\end{aligned}\end{equation}
it is maximized by the choices $\omega_{ji} = e^{i \mathrm{Arg}\, \rho_{ij}} = \frac{\rho_{ij}}{|\rho_{ij}|}$ (and $\omega_{ji}=1$ if $\rho_{ij}=0$), which --- importantly --- identify a Hermitian matrix $\omega = \omega^\dag$. The resulting value of the objective function is
\begin{equation}\begin{aligned}
	\max_{\omega} \frac12 \left( 1 + \sum\nolimits_{i,j} \omega_{ji}\, \rho_{ij} X_{ji} \right) &= \frac12 \left( 1 + \sum\nolimits_{i,j} |\rho_{ij}| X_{ji} \right)\\ &= \frac12 \left( 1 + \Tr \left[|\rho|\, X\right] \right) .
\end{aligned}\end{equation}
The maximization over $X$ subjected to the aforementioned constraints yields the first line in the statement of the Theorem. The second line is then simply the corresponding dual SDP --- the fact that strong duality holds, and thus the two problems have the same optimal value, can be straightforwardly seen by choosing any matrix $N$ with strictly positive entries as a feasible solution to~\eqref{eq:sdp_min} and employing Slater's theorem~\cite{boyd_2004}.
\end{proof}

\begin{rem}
The results of Theorem~\ref{thm:sdp} should be compared with the expressions for the fidelity of distillation associated with the larger sets DIO and MIO, given by~\cite{regula_2017}
\begin{equation}\begin{aligned}
	F_\MIO(\rho,2) = F_\DIO(\rho,2) = \frac{1}{2} \left( \min_{\substack{D = \Delta(D)}} \norm{\rho + D}{1} + 1 \right).
\end{aligned}\end{equation}
As we already mentioned, it is known that for all pure states $\psi$ and all $m$ one has $F_\MIO(\psi,m) = F_\SIO(\psi,m)$.
\end{rem}

\begingroup
\renewcommand\thedefinition{\ref{F cosbit thm}}
\begin{theorem}
For all states $\rho$ and all integers $n$, one has
\begin{equation}\begin{aligned}
	\frac{1+\eta(\rho)}{2} - \frac{\eta(\rho)}{2} \mu_\rho^n \leq F_{\SIO}\left(\rho^{\otimes n},2\right) \leq \frac{1+\eta(\rho)}{2}\, ,
\end{aligned} \label{bounds F SIO n}
\end{equation}
where $0<\mu_\rho<1$ is a number that depends only on $\rho$. Hence
\begin{equation}\begin{aligned}
	\lim_{n\to \infty} F_{\SIO}\left(\rho^{\otimes n},2\right) = \frac{1+\eta(\rho)}{2}\, ,
\end{aligned} \tag{\ref{limit F SIO}}
\end{equation}
and the convergence in the above identity is exponentially fast.
\end{theorem}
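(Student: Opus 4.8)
The plan is to prove the two–sided estimate~\eqref{bounds F SIO n}, from which~\eqref{limit F SIO} and the exponential rate follow immediately by sandwiching. Both inequalities are read off from the semidefinite program of Theorem~\ref{thm:sdp}, applied to $\rho^{\otimes n}$, together with the already established tensorization identity $\eta(\rho^{\otimes n})=\eta(\rho)$. Write $M:=\lvert\rho^{\otimes n}\rvert=\lvert\rho\rvert^{\otimes n}$, so that $M_{II}=(\rho^{\otimes n})_{II}$ and $\Tr M=\sum_i\rho_{ii}^{\,\otimes n}=1$. The SDP says $F_\SIO(\rho^{\otimes n},2)=\tfrac12\big(1+\max_X\Tr(MX)\big)$ over Hermitian $X$ with $\Delta(X)=0$, $X\succeq 0$ entrywise, and $\lVert X\rVert_\infty\le 1$, so the whole task is to trap $\max_X\Tr(MX)$ between $\eta(\rho)\big(1-\mu_\rho^{\,n}\big)$ and $\eta(\rho)$.

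For the upper bound it suffices to show $F_\SIO(\sigma,2)\le\tfrac{1+\eta(\sigma)}2$ for an arbitrary state $\sigma$, since $\eta$ tensorizes. Let $|m\rangle:=\sum_i\sqrt{\sigma_{ii}}\,|i\rangle$, a unit vector because $\sum_i\sigma_{ii}=1$. For any feasible $X$, discarding diagonal terms using $\Delta(X)=0$, then invoking $|\sigma_{ij}|\le\eta(\sigma)\sqrt{\sigma_{ii}\sigma_{jj}}$ for $i\neq j$ together with $X_{ji}\ge 0$, then recognizing the sum as a quadratic form, and finally using $\lVert X\rVert_\infty\le 1$, one gets
\[
\Tr(\lvert\sigma\rvert X)=\sum_{i\neq j}|\sigma_{ij}|\,X_{ji}\ \le\ \eta(\sigma)\sum_{i\neq j}\sqrt{\sigma_{ii}\sigma_{jj}}\,X_{ji}=\eta(\sigma)\,\langle m|X|m\rangle\ \le\ \eta(\sigma)\lVert X\rVert_\infty\ \le\ \eta(\sigma)\,.
\]
(If $\sigma_{ii}=0$ then positivity of $\sigma$ forces $\sigma_{ij}=0$, so the middle step is valid throughout.) Taking $\sigma=\rho^{\otimes n}$ yields the right-hand inequality of~\eqref{bounds F SIO n}.

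For the lower bound I will exhibit a single feasible $X$ with $\Tr(MX)\ge\eta(\rho)\big(1-\mu_\rho^{\,n}\big)$; the case $\eta(\rho)=0$ being trivial, assume $\eta(\rho)>0$ and fix a pair $(i^*,j^*)$ attaining the maximum in~\eqref{eta}. Let $G_n$ be the graph on strings in which $I\sim J$ iff they differ in exactly one coordinate $m$ with $\{I_m,J_m\}=\{i^*,j^*\}$, and demand $\supp X\subseteq G_n$. For such an edge one has $M_{IJ}=\eta(\rho)\sqrt{M_{II}M_{JJ}}$ exactly, so $\Tr(MX)=\eta(\rho)\,\langle m_n|X|m_n\rangle$ with $|m_n\rangle:=\sum_I\sqrt{M_{II}}\,|I\rangle$, and the problem reduces to finding a feasible $X$ on $G_n$ with $\langle m_n|X|m_n\rangle\ge 1-\mu_\rho^{\,n}$. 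The connected components of $G_n$ are sub-hypercubes: a component is fixed by the coordinates whose letter lies outside $\{i^*,j^*\}$, and has ``active dimension'' $N$ equal to the number of the remaining coordinates, on which $|m_n\rangle$ restricts (up to a scalar) to $\sum_I\sqrt{\rho_{i^*i^*}^{\,a(I)}\rho_{j^*j^*}^{\,N-a(I)}}\,|I\rangle$, where $a(I)$ counts the letters equal to $i^*$. On each component of sufficiently large $N$ I place positive edge weights, constant on each level $a$, solving the (under-determined, second-order) linear recurrence that makes the restricted $|m_n\rangle$ an exact eigenvector of eigenvalue $1$ on the interior levels, and truncated to the typical band $\big|\tfrac{a}{N}-\tfrac{\rho_{i^*i^*}}{\rho_{i^*i^*}+\rho_{j^*j^*}}\big|\le\varepsilon$; on the remaining (too small) components I take $X=0$. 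Since under the law $I\mapsto M_{II}$ the active dimension is $\mathrm{Binomial}(n,\rho_{i^*i^*}+\rho_{j^*j^*})$ and the level within a component of dimension $N$ is $\mathrm{Binomial}\!\big(N,\rho_{i^*i^*}/(\rho_{i^*i^*}+\rho_{j^*j^*})\big)$, Hoeffding bounds show that the $|m_n\rangle$-weight discarded (on small components, outside the band, and at the band boundary, where the eigenvector equation is broken) is $e^{-\Omega(n)}$; hence $\langle m_n|X|m_n\rangle\ge 1-\mu_\rho^{\,n}$ and the left-hand inequality of~\eqref{bounds F SIO n} follows.

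The step I expect to be the real obstacle is verifying feasibility of this $X$, namely the operator-norm bound $\lVert X\rVert_\infty\le 1$, jointly with nonnegativity of all edge weights. I would obtain both from the Collatz--Wielandt characterization with test vector $|m_n\rangle$: on the interior levels the ratio $(X|m_n\rangle)_I/(m_n)_I$ equals $1$ by construction, while at the two extreme levels of the band it can be made strictly less than $1$ by the right choice of the single free parameter in the recurrence, which simultaneously keeps all weights nonnegative. Controlling the recurrence on the band is delicate because it is of discrete Sturm--Liouville type and only marginally stable (its linearization at the band centre has multiplier $-1$), so one must take the band half-width $\varepsilon N$ to dominate the $O(\sqrt N)$ fluctuation scale, which holds for fixed $\varepsilon$; a further small rescaling of $X$, costing only another $e^{-\Omega(n)}$ factor, absorbs any residual slack. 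Once~\eqref{bounds F SIO n} is established, the limit~\eqref{limit F SIO} and the exponential speed of convergence are immediate.
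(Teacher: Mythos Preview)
Your upper bound is correct and is a clean alternative to the paper's argument. The paper does not go through the SDP at all for the upper bound: it takes an arbitrary SIO $\Lambda$ with qubit output, writes the output state explicitly, and uses the monotonicity and tensorization of $\eta$ to bound $\eta(\Lambda(\rho^{\otimes n}))\le\eta(\rho)$, which together with $2\sqrt{p(1-p)}\le 1$ gives the fidelity bound directly. Your route via the primal SDP and the test vector $|m\rangle=\sum_i\sqrt{\sigma_{ii}}\,|i\rangle$ is just as short and arguably more transparent, since it makes no reference to the Kraus structure of SIO.

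The lower bound, however, has a genuine gap, and the gap is exactly where you flag it: you never establish that your $X$ is feasible. The Collatz--Wielandt step you propose requires a \emph{strictly positive} test vector on the whole support of $X$ and a uniform bound $(X|m_n\rangle)_I/(m_n)_I\le 1$; at the band boundary the eigenvector equation is broken, and your claim that ``the right choice of the single free parameter'' simultaneously forces the boundary ratios below $1$ and keeps all level weights nonnegative is asserted, not proved. The recurrence is, as you say, only marginally stable, so small perturbations of the free parameter propagate across the band, and there is no a~priori reason the two boundary conditions (ratio $\le 1$ at both ends) and the sign constraints can be met at once for all $N$ in the typical range. Even if this can be pushed through, it is a substantial analytic exercise that your sketch does not carry out.

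More to the point, you are working much harder than necessary. The paper obtains the lower bound operationally rather than through the SDP: it exhibits a single-copy SIO instrument (a ``diagonal filter'') with Kraus operator $K_0=\sqrt{\min\{\rho_{i^*i^*},\rho_{j^*j^*}\}}\big(\rho_{i^*i^*}^{-1/2}\,|i^*\rangle\langle i^*|+\rho_{j^*j^*}^{-1/2}\,|j^*\rangle\langle j^*|\big)$, which on outcome $0$ produces the qubit state $\tfrac12\big(\begin{smallmatrix}1&\eta(\rho)\\ \eta(\rho)&1\end{smallmatrix}\big)$ with fidelity exactly $(1+\eta(\rho))/2$. Applying this filter independently to each of the $n$ copies and keeping the first success gives the claimed lower bound with $\mu_\rho=1-2\min\{\rho_{i^*i^*},\rho_{j^*j^*}\}$, with no feasibility analysis needed. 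If you insist on an SDP proof, this protocol can be translated into a primal feasible $X$ whose operator norm is $1$ by inspection (it is block-diagonal with $2\times 2$ blocks $\big(\begin{smallmatrix}0&1\\1&0\end{smallmatrix}\big)$ on the ``first success'' positions), bypassing your recurrence entirely.
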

\endgroup
\begin{proof}
We start by proving the upper bound in Eq.~\eqref{bounds F SIO n}. Consider an arbitrary SIO operation $\Lambda$ that maps a system of dimension $d^n$ into a single qubit. Because of the monotonicity and tensorisation properties of the $\eta$ function, we can write $\eta' \coloneqq \eta\left(\Lambda(\rho^{\otimes n})\right)\leq \eta(\rho^{\otimes n}) = \eta(\rho)$. Remembering that $\Lambda(\rho^{\otimes n})$ is a qubit state, this means that there are $0<p<1$ and $\varphi\in\RR$ such that
\bbb
\Lambda(\rho^{\otimes n}) \eqqcolon \begin{pmatrix} p & \eta' \sqrt{p (1-p)} e^{i\varphi} \\ \eta' \sqrt{p (1-p)} e^{-i\varphi} & 1-p \end{pmatrix} .
\eee
The fidelity between the above state and a coherence bit reads
\begin{align*}
F \left( \Lambda(\rho^{\otimes n}), \Psi_2\right) &= \Tr\left[ \Lambda(\rho^{\otimes n}) \Psi_2\right] \\
&= \frac12 \left( 1+ 2 \eta' \sqrt{p(1-p)} \cos (\varphi) \right) \\
&\leq \frac12 \left( 1+ \eta' \right) \\
&\leq \frac12 \left( 1+ \eta(\rho)\right) ,
\end{align*}
where for the first inequality we noted that $2\sqrt{p(1-p)}\leq 1$. Taking the supremum over all SIO $\Lambda$ yields the upper bound in Eq.~\eqref{bounds F SIO n}.

The lower bound can be proved by designing a suitable SIO protocol that achieves the prescribed fidelity on $n$ copies. To do this, without loss of generality we are going to assume that for the particular state $\rho$ we are considering: (a) the maximum in Eq.~\eqref{eta} is achieved on the pair $(i,j)=(1,2)$; and (b) $\rho_{12}$ is real. These two assumptions imply that
\bbb
\eta(\rho) = \frac{\rho_{12}}{\sqrt{\rho_{11}\rho_{22}}}\, .
\eee
Now, we construct a suitable ``diagonal filtering'' SIO instrument $\Lambda_{\mathrm{DF}}$ that maps a $d$-dimensional system into a qubit and is defined by the Kraus operators
\begin{align*}
K_0 &\coloneqq \sqrt{\min\{\rho_{11},\rho_{22}\}} \left(\rho_{11}^{-1/2} \proj{1} + \rho_{22}^{-1/2} \proj{2}\right)\, , \\
K_1 &\coloneqq \sqrt{1-\min\left\{1,\frac{\rho_{22}}{\rho_{11}}\right\}} \proj{1}\, , \\
K_2 &\coloneqq \sqrt{1-\min\left\{1,\frac{\rho_{11}}{\rho_{22}}\right\}} \ketbraa{1}{2}\, ,\\
K_\alpha &\coloneqq \ketbraa{1}{\alpha}\quad \text{for $\alpha=3,\ldots,d$.}
\end{align*}
The probability of getting the outcome $\alpha=0$ when applying the instrument $\Lambda_{\mathrm{DF}}$ on $\rho$ is clearly
\bbb
P(0) = \Tr \left[ K_0 \rho K_0^\dag\right] = 2 \min\{\rho_{11}, \rho_{22}\}\, .
\eee
The post-measurement state conditioned on the outcome $\alpha=0$ is then
\bbb
\tilde{\rho}_0 = \frac{K_0\rho K_0^\dag}{P(0)} = \frac{1}{2}\begin{pmatrix} 1 & \eta(\rho) \\ \eta(\rho) & 1 \end{pmatrix} .
\eee
Let us apply the instrument $\Lambda_{\mathrm{DF}}$ separately on each one of the $n$ copies of $\rho$ we have at our disposal. Since
\bbb
F(\tilde{\rho}_0, \Psi_2) = \frac{1+\eta(\rho)}{2}
\eee
matches the upper bound in Eq.~\eqref{bounds F SIO n}, we have achieved maximal distillation fidelity whenever at least one of the $n$ outcomes we obtain is $\alpha=0$. This happens with probability
\bbb
P_{\mathrm{success}} = 1 - \left(1-P(0)\right)^n = 1 - \left(1 - 2 \min\{\rho_{11}, \rho_{22}\}\right)^n = 1 - \mu_\rho^n\, ,
\eee
where $\mu_\rho\coloneqq 2 \min\{\rho_{11},\rho_{22}\}$. If none of the outcomes is $\alpha=0$, then we can simply output the fixed state $\ket{1}$. The average distillation fidelity of this protocol is
\begin{align*}
\widebar{F} = (1 - \mu_\rho^n) \cdot \frac{1+\eta(\rho)}{2} + \mu_\rho^n \cdot \frac12 = \frac{1+\eta(\rho)}{2} - \frac{\eta(\rho)}{2} \mu_\rho^n\, ,
\end{align*}
reproducing the lower bound in Eq.~\eqref{bounds F SIO n}.
\end{proof}

\begin{proposition}
	For the state $\rho_\lambda = \lambda \Psi_2 + (1-\lambda) \frac{\id}{2}$ with $\lambda \in [0,1]$, it holds that $F_\SIO(\rho_\lambda^{\otimes n}, 2) = \frac{\lambda+1}{2}$ for any $n \in \NN$.
\end{proposition}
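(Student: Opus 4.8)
The plan is to get the upper bound for free from Theorem~\ref{F cosbit thm} and to establish the matching lower bound by writing down one explicit feasible point of the primal SDP of Theorem~\ref{thm:sdp}; given those two results the statement is essentially a feasibility check. First I would record that $\eta(\rho_\lambda)=(\lambda/2)\big/\!\sqrt{(1/2)(1/2)}=\lambda$ straight from~\eqref{eta} (as already noted in the main text). Then the upper bound $F_\SIO(\rho_\lambda^{\otimes n},2)\le(1+\lambda)/2$ follows from the upper bound in Theorem~\ref{F cosbit thm}, which holds verbatim at \emph{every} finite $n$: its proof only uses the monotonicity and tensorisation of $\eta$ to write $\eta(\Lambda(\rho^{\otimes n}))\le\eta(\rho)$, with no appeal to the number of copies. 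So the entire content of the proposition is the reverse inequality $F_\SIO(\rho_\lambda^{\otimes n},2)\ge(1+\lambda)/2$.

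For the lower bound I would take the candidate $X\coloneqq(\ketbra{1}{2}+\ketbra{2}{1})\otimes\id^{\otimes(n-1)}$ --- the flip on a single tensor factor --- and check the three constraints of~\eqref{eq:sdp_max}: $\Delta(X)=0$ because $\ketbra{1}{2}+\ketbra{2}{1}$ is off-diagonal and tensoring with $\id$ preserves this; $X\succeq0$ entrywise because $\ketbra{1}{2}+\ketbra{2}{1}$ and $\id$ have nonnegative entries and entrywise-nonnegativity is preserved under tensor products; and $-\id\le X\le\id$ because the flip has eigenvalues $\pm1$ and tensoring with identities keeps the spectrum inside $\{+1,-1\}$. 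Since every entry of $\rho_\lambda$ (hence of $\rho_\lambda^{\otimes n}$) is real and nonnegative for $\lambda\in[0,1]$ we have $|\rho_\lambda^{\otimes n}|=\rho_\lambda^{\otimes n}$, so the objective factorises as $\tfrac12(\Tr[\,|\rho_\lambda^{\otimes n}|\,X]+1)=\tfrac12\big(\Tr[\rho_\lambda(\ketbra{1}{2}+\ketbra{2}{1})]\,(\Tr\rho_\lambda)^{n-1}+1\big)=\tfrac12(\lambda+1)$, using $\Tr[\rho_\lambda(\ketbra{1}{2}+\ketbra{2}{1})]=2(\rho_\lambda)_{12}=\lambda$ and $\Tr\rho_\lambda=1$. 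Combined with the upper bound this gives $F_\SIO(\rho_\lambda^{\otimes n},2)=(1+\lambda)/2$ for all $n$.

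I do not expect a genuine obstacle here once Theorems~\ref{thm:sdp} and~\ref{F cosbit thm} are available; the only mildly delicate point is that the \emph{same} $X$ must satisfy both the semidefinite bound $-\id\le X\le\id$ and the entrywise-positivity bound $X\succeq0$ --- the latter being exactly the constraint that makes the SIO fidelity SDP more restrictive than the MIO/DIO one --- and it is convenient that a permutation-type operator clears both at once. I would also remark that the lower bound in fact needs no SDP at all: $F(\rho_\lambda,\Psi_2)=\braket{\Psi_2|\rho_\lambda|\Psi_2}=(1+\lambda)/2$, and discarding $n-1$ of the copies is itself an SIO map, so $F_\SIO(\rho_\lambda^{\otimes n},2)\ge(1+\lambda)/2$ immediately. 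The substantive content of the proposition is therefore the \emph{upper} bound --- that no collective processing of arbitrarily many copies can improve the single-copy fidelity even slightly --- which is precisely what the monotone $\eta$ delivers via Theorem~\ref{F cosbit thm}.
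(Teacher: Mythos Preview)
Your proof is correct, but the route differs from the paper's in the upper bound. The paper obtains the upper bound by constructing explicit feasible solutions for the \emph{dual} SDP~\eqref{eq:sdp_min}: it takes $D=-\tfrac{1-\lambda}{2^n}\id$ and $N=\lambda\Psi_2^{\otimes n}+\tfrac{1-\lambda}{2^n}\id-\rho_\lambda^{\otimes n}$, checks $N\succeq0$ entrywise (the off-diagonal entries of $\rho_\lambda^{\otimes n}$ are all $\lambda^m/2^n\le 1/2^n$), and reads off $\tfrac12(\|\lambda\Psi_2^{\otimes n}\|_1+1)=\tfrac{1+\lambda}{2}$. You instead invoke the general upper bound in Theorem~\ref{F cosbit thm}, which already holds at every finite $n$. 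Your route is shorter and more conceptual; the paper's is self-contained within the SDP framework and serves as a concrete illustration that the dual program~\eqref{eq:sdp_min} can certify optimality without appealing to the monotone $\eta$. For the lower bound, your closing remark --- that $F(\rho_\lambda,\Psi_2)=(1+\lambda)/2$ already at $n=1$, with tracing out the extra copies an SIO --- is exactly the paper's argument; your primal feasible point $X=(\ketbra{1}{2}+\ketbra{2}{1})\otimes\id^{\otimes(n-1)}$ is a valid alternative but not needed.
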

\begin{proof}
	We obtain the lower bound of $\frac{\lambda+1}{2}$ by considering $n=1$ and simply noting that $F(\rho_\lambda,\Psi_2) = \frac{\lambda+1}{2}$. To show the upper bound, we consider the SDP~\eqref{eq:sdp_min} for $F_\SIO(\rho_\lambda^{\otimes n}, 2)$ and take as feasible solutions the following choices:
	\begin{equation}\begin{aligned}
		D &= - \frac{1-\lambda}{2^n} \id,\\
		N &= \lambda \Psi_2^{\otimes n} + \frac{1-\lambda}{2^n} \id - \rho_\lambda^{\otimes n},
	\end{aligned}\end{equation}
	so that $|\rho_\lambda^{\otimes n}| + D + N = \lambda \Psi_2^{\ox n}$. It remains to verify that $N$ is a valid feasible solution, that is, that all of its coefficients are non-negative. This follows by noticing that the diagonal elements of $N$ are given as
	\begin{equation}\begin{aligned}
		N_{ii} = \lambda \frac{1}{2^n} + (1-\lambda) \frac{1}{2^n} - [\rho_\lambda^{\otimes n}]_{ii} = \frac{1}{2^n} - \frac{1}{2^n} = 0
	\end{aligned}\end{equation}
and similarly the off-diagonal elements are $N_{ij} = \frac{1}{2^n} - [\rho_\lambda^{\ox n}]_{ij}$. Since the off-diagonal elements of $\rho_\lambda^{\ox n}$ are always of the form $\frac{\lambda^m}{2^n}$ for some $1 \leq m \leq n$, and moreover $\lambda \leq 1$, we get $N_{ij} \geq 0$. Hence
\begin{equation}\begin{aligned}
	F_\SIO(\rho^{\ox n},2) \leq \frac{1}{2}\left(\norm{|\rho_\lambda^{\otimes n}|+D+N}{1}+1\right) = \frac{1}{2}\left(\norm{\lambda \Psi_2^{\ox n}}{1}+1\right) = \frac{1}{2}(\lambda+1)
\end{aligned}\end{equation}
	 as required.
\end{proof}

\section{Distillable coherence under SIO and PIO}

We start from a $d$-dimensional state $\rho$, which is as usual assumed to satisfy $\Delta(\rho)>0$ without loss of generality. Let us define the positive matrix
\bb
A_\rho \coloneqq \Delta(\rho)^{-1/2} \rho\, \Delta(\rho)^{-1/2} ,
\label{A rho}
\ee
which satisfies $(A_\rho)_{ii}\equiv 1$ for all $i$. Consider the graph $G_\rho = (V_\rho, E_\rho)$ with vertices $V_\rho \coloneqq \{1,\ldots,d\}$ and edges
\bbb
E_\rho \coloneqq \left\{ (i,j):\ |(A_\rho)_{ij}|=1 \right\} = \left\{ (i,j):\ |\rho_{ij}|= \sqrt{\rho_{ii}\rho_{jj}} \right\} .
\eee
For simplicity, we have included into $E_\rho$ all ``diagonal'' pairs of the form $(i,i)$. If $E_\rho$ contains only these elements, we say that $E_\rho$ is trivial.

At this point, the idea we may have is that in order to distill asymptotically perfect maximally coherent states with SIO, the only coherence inside $\rho$ that matters is that identified by the entries $\rho_{ij}$ corresponding to pairs $(i,j)\in E_\rho$. We could be tempted to construct a ``trimmed'' state $\widebar{\rho}$ by cutting off all other entries, i.e.
\bb
\widebar{\rho} \coloneqq \sum_{ (i,j)\in E_\rho} \rho_{ij} \ketbraa{i}{j} ,
\label{rho bar}
\ee
and to conjecture that this is the only object that matters when computing the distillable coherence under SIO (we show below that Eq.~\eqref{rho bar} defines indeed a legitimate density matrix). This is indeed the case, as Theorem~\ref{Q lower bound prop} in the main text shows. Let us start our discussion with a technical result that clarifies the structure of the graph $G_\rho$.

\begin{lemma} \label{G rho cliques lemma}
The connected components of the graph $G_\rho$ are all cliques (i.e.\ complete subgraphs). Equivalently, there exists a partition $\{I_s\}_{s\in\mathcal{S}}$ of $\{1,\ldots, d\}$ such that
\bb
(i,j)\in E_\rho \quad \Longleftrightarrow\quad \exists\ s\in\mathcal{S}:\ i,j\in I_s\, .
\label{E rho partition}
\ee
Moreover, setting $\Pi_{I_s}\coloneqq \sum_{i\in I_s} \ketbra{i}{i}$, all submatrices $\Pi_{I_s}\rho\Pi_{I_s}$ have rank one.
\end{lemma}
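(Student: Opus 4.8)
The plan is to read the whole statement off the structure of $A_\rho=\Delta(\rho)^{-1/2}\rho\,\Delta(\rho)^{-1/2}$, which by construction is positive semidefinite with all diagonal entries equal to $1$ --- i.e.\ a correlation matrix. I would factor $A_\rho = M^\dagger M$, so that $(A_\rho)_{ij}=\braket{m_i|m_j}$ for the columns $m_1,\dots,m_d$ of $M$; the unit-diagonal condition says each $m_i$ is a unit vector. By Cauchy--Schwarz $|(A_\rho)_{ij}|\le 1$, with equality precisely when $m_i$ and $m_j$ are linearly dependent, and then necessarily $m_j=e^{i\phi}m_i$ for some phase $\phi$ (both being unit vectors). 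Thus $(i,j)\in E_\rho$ if and only if $m_i$ and $m_j$ are proportional by a unit-modulus scalar.

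The crucial --- and essentially only --- conceptual step is that this notion of proportionality is an \emph{equivalence relation} on $\{1,\dots,d\}$: reflexivity and symmetry are trivial, and transitivity holds because the composition of two unit-modulus rescalings is again one. Hence $i\sim j:\Leftrightarrow(i,j)\in E_\rho$ is an equivalence relation, and taking $\{I_s\}_{s\in\mathcal S}$ to be its classes gives exactly the characterization~\eqref{E rho partition}; equivalently, every connected component of $G_\rho$ is a clique. For the rank-one claim, fix a class $I_s$: all vectors $m_i$ with $i\in I_s$ share a common direction $u$, say $m_i=e^{i\phi_i}u$, so that $(A_\rho)_{ij}=e^{i(\phi_j-\phi_i)}$ for $i,j\in I_s$, i.e.\ $\Pi_{I_s}A_\rho\Pi_{I_s}=\ketbraa{w}{w}$ with $w\coloneqq\sum_{i\in I_s}e^{-i\phi_i}\ket{i}\neq 0$, manifestly of rank one. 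Since $\rho=\Delta(\rho)^{1/2}A_\rho\,\Delta(\rho)^{1/2}$ and the diagonal operator $\Delta(\rho)^{1/2}$ commutes with the coordinate projector $\Pi_{I_s}$, we get $\Pi_{I_s}\rho\,\Pi_{I_s}=\big(\Pi_{I_s}\Delta(\rho)^{1/2}\Pi_{I_s}\big)\big(\Pi_{I_s}A_\rho\Pi_{I_s}\big)\big(\Pi_{I_s}\Delta(\rho)^{1/2}\Pi_{I_s}\big)$; the outer factors are invertible on the range of $\Pi_{I_s}$ because $\Delta(\rho)>0$, so conjugating a rank-one operator by them leaves the rank unchanged, giving $\rank\!\big(\Pi_{I_s}\rho\,\Pi_{I_s}\big)=1$.

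I do not anticipate a real obstacle; the only point requiring care is the bookkeeping of phases and the observation that equality in Cauchy--Schwarz for \emph{unit} vectors forces proportionality by a phase (not by an arbitrary scalar), which is exactly what makes $\sim$ transitive and hence the components cliques. An equivalent elementary route, should one prefer to avoid Gram vectors, is to argue directly with principal minors of $A_\rho$: vanishing of the $2\times2$ minors on the edges of a path $i-k-j$ forces the $3\times3$ principal submatrix on $\{i,k,j\}$ to be singular, which pins down $(A_\rho)_{ij}$ and yields $|(A_\rho)_{ij}|=1$; I would nonetheless present the Gram-vector version as it is cleaner. Finally, the same block decomposition shows en passant that $\widebar{\rho}$ in~\eqref{rho bar} equals $\sum_{s\in\mathcal S}\Pi_{I_s}\rho\,\Pi_{I_s}$, which is block-diagonal, positive semidefinite, and of unit trace --- hence a legitimate density matrix, as claimed in the main text.
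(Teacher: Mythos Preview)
Your proof is correct and takes a genuinely different route from the paper's. The paper argues by extracting the largest clique, showing via a Hilbert--Schmidt norm computation that a positive semidefinite matrix with all entries of unit modulus must be rank one, and then uses a Schur complement argument to show that no edge can leave this clique without contradicting maximality; the connected components are then peeled off one at a time. Your approach instead reads off the whole structure in one stroke: the Gram factorisation $A_\rho=M^\dagger M$ turns the condition $|(A_\rho)_{ij}|=1$ into the statement that the unit vectors $m_i$ and $m_j$ differ by a phase, and since ``differing by a phase'' is transitive, the edge relation is automatically an equivalence relation --- which is precisely the assertion that the components are cliques. This is conceptually cleaner and avoids both the induction on cliques and the explicit Schur complement computation; the paper's argument, by contrast, stays closer to bare matrix manipulations and does not invoke a factorisation. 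Your alternative sketch via $3\times3$ principal minors along a path is morally the same idea as the paper's Schur complement step, just phrased locally rather than around a maximal clique. The rank-one transfer from $\Pi_{I_s}A_\rho\Pi_{I_s}$ to $\Pi_{I_s}\rho\,\Pi_{I_s}$ via commutation of $\Delta(\rho)^{1/2}$ with coordinate projectors is essentially identical in both proofs.
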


\begin{proof}
Consider the largest clique of $G_\rho$, whose vertices can be taken to be $\{1,\ldots, k\}$ up to permutations. Note that we may have $k=1$. Hence, $|(A_\rho)_{ij}|=1$ for all $1 \leq i,j\leq k$, i.e., the upper left $k\times k$ principal submatrix of $A_\rho$ is entirely composed of phases (i.e., complex number of modulus $1$). Now, it is well known that it is possible for a positive semidefinite matrix $B$ to have all entries of modulus $1$ if and only if there are real numbers $\varphi_p$ such that $B_{pq} = e^{i(\varphi_p - \varphi_q)}$ for all $p,q$, so that $B$ is a multiple of the projector onto a maximally coherent state. This can be proved explicitly as follows. Let $B$ be of size $k$. Since all entries have modulus one, the Hilbert--Schmidt norm of $B$ evaluates to $\sum_{p} \lambda_p^2 = \|B\|_2^2=\sum_{p,q} |B_{p,q}|^2 = k^2$, where $\lambda_p\geq 0$ are the eigenvalues of $B$. Given that $\sum_p \lambda_p = \Tr B = k$, this is possible iff only one of the eigenvalues is nonzero (and equal to $k$). Hence, $B$ is of rank one. In light of these considerations, up to conjugating by a diagonal unitary we will assume that $(A_\rho)_{ij}\equiv 1$ for all $1\leq i,j\leq k$.

Now, let $1\leq i\leq k$ and $k<j\leq d$ be such that $(i,j)\in E_\rho$, or $(A_\rho)_{ij}=e^{i\varphi}$ for some $\varphi\in\RR$. Again, up to diagonal unitaries we can take $\varphi=0$ and hence $(A_\rho)_{ij}=1$. Moreover, up to permutations we can also set $i=1$ and $j=k+1$. The $(k+1)\times (k+1)$ upper left corner of $A_\rho$ reads
\bbb
\Pi_{k+1} A_\rho \Pi_{k+1} \coloneqq \begin{pmatrix} 1 & 1 & \ldots & 1 & 1 \\ 1 & 1 & \ldots & 1 & * \\ \vdots & \vdots & & \vdots & \vdots \\ 1 & 1 & \ldots & 1 & * \\ 1 & * & \ldots & * & 1 \end{pmatrix} .
\eee
Positivity can be imposed e.g.\ by taking the Schur complement with respect to the $(k+1)$-th diagonal element. By doing this, one can show that all unknown entries (marked with $*$ above) must in fact be $1$ if $A_{\rho}$ has to be positive definite. Hence, $(i,k+1)\in E_\rho$ for all $i=1,\ldots ,k$, and $\{1,\ldots, k+1\}$ is a clique. Since this is in contradiction with the requirement that $\{1,\ldots, k \}$ be the largest clique, we conclude that $\{1,\ldots, k\}$ is a connected component of the graph. Continuing in this manner, we can isolate all connected components one by one.

Denote by $\{I_s\}_{s\in\mathcal{S}}$ the corresponding partition of $\{1,\ldots, d\}$. Since we have shown that $\rank \Pi_{I_s} A_\rho\Pi_{I_s} =1$ for all $s$, using the fact that $[\Pi_{I_s}, D]=0$ for all diagonal matrices $D$ we can write
\bbb
\rank \left(\Pi_{I_s} \rho \Pi_{I_s}\right) = \rank  \left(\Pi_{I_s} \Delta(\rho)^{1/2} A_\rho \Delta(\rho)^{1/2}  \Pi_{I_s} \right) = \rank \left(\Delta(\rho)^{1/2} \Pi_{I_s}  A_\rho  \Pi_{I_s}  \Delta(\rho)^{1/2} \right) = \rank \left(\Pi_{I_s}  A_\rho  \Pi_{I_s}\right) ,
\eee
which concludes the proof.
\end{proof}

\begin{rem}
If $\Delta(\rho)$ is not guaranteed to have full support, the second statement of Lemma~\ref{G rho cliques lemma} retains its validity, provided that one intends the sets $\{I_s\}_{s\in\mathcal{S}}$ as a family of disjoint subsets of $[d]$ that do not necessarily form a partition.
\end{rem}

As we show below, Lemma~\ref{G rho cliques lemma} implies that $\widebar{\rho}$ is positive semidefinite and hence a legitimate density matrix. We can thus consider the quantity $Q$ defined as
\bb
Q(\rho)\coloneqq S\left(\Delta(\rho)\right) - S\left(\widebar{\rho}\right) .
\tag{\ref{Q}}
\ee
The following result summarizes the main elementary properties of this object.

\begin{corollary} \label{Q positive cor}
For all states $\rho$ there exists a completely positive, trace-preserving and unital channel $\mathcal{P}_\rho$ such that the ``trimmed'' state $\widebar{\rho}$ of Eq.~\eqref{rho bar} satisfies $\widebar{\rho} = \mathcal{P}_\rho(\rho)$. In particular, $\widebar{\rho}$ is a legitimate density matrix, and moreover
\bb
0\leq Q(\rho)\leq S(\Delta(\rho)) - S(\rho) = C_{d,\IO} (\rho)\, .
\label{Q positive}
\ee
\end{corollary}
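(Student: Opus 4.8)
The plan is to exhibit $\mathcal{P}_\rho$ explicitly as the block-pinching channel attached to the partition $\{I_s\}_{s\in\mathcal{S}}$ furnished by Lemma~\ref{G rho cliques lemma}, and then to read off both inequalities in Eq.~\eqref{Q positive} from the entropy-increasing property of unital channels, together with the elementary fact that $\rho$ and $\widebar{\rho}$ have the same diagonal.

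First I would set $\mathcal{P}_\rho(X)\coloneqq \sum_{s\in\mathcal{S}}\Pi_{I_s} X \Pi_{I_s}$ with $\Pi_{I_s}=\sum_{i\in I_s}\ketbra{i}{i}$. By the partition characterisation Eq.~\eqref{E rho partition}, $(i,j)\in E_\rho$ precisely when $i$ and $j$ lie in a common block $I_s$, so $\mathcal{P}_\rho(\rho)=\sum_s\Pi_{I_s}\rho\,\Pi_{I_s}=\sum_{(i,j)\in E_\rho}\rho_{ij}\ketbraa{i}{j}=\widebar{\rho}$ as in Eq.~\eqref{rho bar}. The map $\mathcal{P}_\rho$ is completely positive and trace preserving (it is the projective measurement in $\{\Pi_{I_s}\}$ followed by discarding the outcome) and unital, since $\sum_s\Pi_{I_s}=\id$; in fact it is mixed-unitary, as writing $\omega=e^{2\pi i/|\mathcal{S}|}$ and $V_k=\sum_s\omega^{ks}\Pi_{I_s}$ one checks $\mathcal{P}_\rho(X)=\frac{1}{|\mathcal{S}|}\sum_{k}V_k X V_k^\dagger$. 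Consequently $\widebar{\rho}=\mathcal{P}_\rho(\rho)\geq 0$ and $\Tr\widebar{\rho}=\Tr\rho=1$, so $\widebar{\rho}$ is a genuine density matrix. In the general case where $\Delta(\rho)$ need not have full support, one uses the disjoint blocks of the Remark following Lemma~\ref{G rho cliques lemma} and adjoins the singleton blocks of the remaining indices so that $\{\Pi_{I_s}\}$ is still a resolution of the identity; the rest of the argument is unaffected.

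Next I would record the key observation that $\Delta(\rho)=\Delta(\widebar{\rho})$: every diagonal pair $(i,i)$ lies in $E_\rho$, hence $\widebar{\rho}$ retains the diagonal of $\rho$. Now both bounds follow from the fact that a unital channel does not decrease the von Neumann entropy. On one hand $S(\widebar{\rho})=S(\mathcal{P}_\rho(\rho))\geq S(\rho)$, since $\mathcal{P}_\rho$ is unital, whence $Q(\rho)=S(\Delta(\rho))-S(\widebar{\rho})\leq S(\Delta(\rho))-S(\rho)$. On the other hand $S(\Delta(\rho))=S(\Delta(\widebar{\rho}))\geq S(\widebar{\rho})$, since $\Delta$ is unital, whence $Q(\rho)\geq 0$. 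Finally, the identification $S(\Delta(\rho))-S(\rho)=C_{d,\IO}(\rho)$ is the known formula expressing the relative entropy of coherence as the IO (equivalently MIO) distillable coherence, which I would invoke by citation.

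I do not expect a substantive obstacle here; the only two points needing a line of care are the verification that the block pinching is mixed-unitary (so that it is genuinely a unital channel), and the bookkeeping that reduces the possibly non-full-rank case to one where $\sum_s\Pi_{I_s}=\id$ holds. Everything else is a direct consequence of Lemma~\ref{G rho cliques lemma} and the standard monotonicity of entropy under unital maps.
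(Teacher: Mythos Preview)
Your proposal is correct and follows essentially the same route as the paper: both define $\mathcal{P}_\rho$ as the block-pinching channel $\sum_{s}\Pi_{I_s}(\cdot)\Pi_{I_s}$ associated with the partition of Lemma~\ref{G rho cliques lemma}, verify $\widebar{\rho}=\mathcal{P}_\rho(\rho)$ via Eq.~\eqref{E rho partition}, and deduce the two inequalities from $\Delta(\widebar{\rho})=\Delta(\rho)$ together with entropy monotonicity under unital channels. Your additional observations (the explicit mixed-unitary decomposition of $\mathcal{P}_\rho$ and the bookkeeping for the non-full-support case) go slightly beyond what the paper writes out but are not needed for the argument.
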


\begin{proof}
With the notation of Lemma~\ref{G rho cliques lemma}, it suffices to set
\bbb
\mathcal{P}_\rho(\cdot) \coloneqq \sum_{s\in\mathcal{S}} \Pi_{I_s}(\cdot) \Pi_{I_s} ,
\eee
where $\Pi_{I_s}\coloneqq \sum_{i\in I_s} \proj{i}$. By virtue of Eq.~\eqref{E rho partition}, it is not difficult to check that indeed $\widebar{\rho}=\mathcal{P}_\rho(\rho)$, which shows that $\widebar{\rho}$ is a density matrix. 
As $S\left( \Delta(\sigma)\right)- S(\sigma)\geq 0$ for all density matrices, and $\Delta(\widebar{\rho})=\Delta(\rho)$, this shows immediately that $Q(\rho)\geq 0$. Moreover, since $\mathcal{P}_\rho$ is clearly completely positive, trace-preserving and unital, and unital channels never decrease the entropy, we have that
\bbb
Q(\rho) = S\left(\Delta(\rho)\right) - S\left(\mathcal{P}_\rho(\rho)\right) \leq S\left(\Delta(\rho)\right) - S\left(\rho\right)  .
\eee
The final inequality in Eq.~\eqref{Q positive} follows from~\cite[Thm.6]{winter_2016}.
\end{proof}

It is not difficult to show that a state $\rho$ satisfies $Q(\rho)>0$ if and only if the set of edges $E_\rho$ is nontrivial, i.e.\ if and only if $\eta(\rho)=1$. We now prove that $C_{d,\SIO}(\rho)\geq C_{d,\PIO}(\rho) \geq Q(\rho)$ for all states $\rho$, implying in particular that $\eta(\rho)=1$ ensures the PIO (and hence SIO) distillability of $\rho$. This was the only missing claim in the proof of Theorem~\ref{thm distillability} given in the main text.

\begingroup
\renewcommand\thedefinition{\ref{Q lower bound prop}}
\begin{proposition}
For all states $\rho$ in any dimension, the SIO and PIO distillable coherence satisfies
\bb
C_{d,\SIO}(\rho) \geq C_{d,\PIO}(\rho) \geq Q(\rho)\, .
\label{Q lower bound}
\ee
\end{proposition}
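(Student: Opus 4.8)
The plan is to establish the two inequalities in \eqref{Q lower bound} separately. The inequality $C_{d,\SIO}(\rho)\geq C_{d,\PIO}(\rho)$ is immediate from the definitions: appending an incoherent ancilla, applying an incoherent unitary or permutation, and performing an incoherent projective measurement each admit strictly incoherent Kraus operators, and this property is preserved under composition, so every PIO channel is an SIO channel and the supremum defining $C_{d,\PIO}$ ranges over a subfamily of the maps in the definition of $C_{d,\SIO}$. The substance of the proposition is therefore the bound $C_{d,\PIO}(\rho)\geq Q(\rho)$, which I would prove by exhibiting an explicit PIO protocol.

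The protocol is as follows. Given $n$ copies of $\rho$, apply independently to each copy the incoherent projective measurement $\{\Pi_{I^\rho_s}\}_{s\in\mathcal{S}^\rho}$ associated with the clique partition of Lemma~\ref{G rho cliques lemma}; this is a PIO instrument. Outcome $s$ occurs with probability $P(s)=\Tr[\rho\,\Pi_{I^\rho_s}]$ and leaves the post-measurement state $\widebar{\rho}_s=P(s)^{-1}\Pi_{I^\rho_s}\rho\,\Pi_{I^\rho_s}$, which by the last statement of Lemma~\ref{G rho cliques lemma} is rank one, hence \emph{pure}. On the block of copies that returned outcome $s$ one runs the PIO pure-state distillation protocol of Ref.~\cite{chitambar_2016-1}, which distills $S(\Delta(\psi))$ coherence bits per copy from any pure state $\psi$ with asymptotically vanishing error, and finally tensors the maximally coherent states produced by the different blocks, using $\Psi_{m_1}\ox\Psi_{m_2}=\Psi_{m_1 m_2}$, into a single output.

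The heart of the argument is an entropy identity showing that the expected yield per copy is exactly $Q(\rho)$. Since $\mathcal{P}_\rho$ of Corollary~\ref{Q positive cor} acts on $\rho$ as the block-diagonal state $\widebar{\rho}=\bigoplus_s P(s)\,\widebar{\rho}_s$ with each $\widebar{\rho}_s$ pure, we have $S(\widebar{\rho})=H\big(\{P(s)\}_s\big)$; and since $\Delta(\rho)=\Delta(\widebar{\rho})=\bigoplus_s P(s)\,\Delta(\widebar{\rho}_s)$ we also have $S(\Delta(\rho))=H\big(\{P(s)\}_s\big)+\sum_s P(s)\,S(\Delta(\widebar{\rho}_s))$, whence $Q(\rho)=S(\Delta(\rho))-S(\widebar{\rho})=\sum_s P(s)\,S(\Delta(\widebar{\rho}_s))$, which is precisely the average distillation rate of the protocol.

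To make the asymptotics rigorous I would fix $r<Q(\rho)$ and a small $\delta>0$, and condition on the high-probability event (via the weak law of large numbers, a union bound over the finitely many $s\in\mathcal{S}^\rho$, and a large-deviation estimate) that every block has size $N_s\geq n(P(s)-\delta)$. On this event each block is large enough to distill, with error $\epsilon_n\to 0$, a maximally coherent state $\Psi_{2^{k_s}}$ with the $k_s$ chosen so that $\sum_s k_s=\lfloor rn\rfloor$, jointly producing $\Psi_{2^{\lfloor rn\rfloor}}$; on the complementary vanishing-probability event one outputs a fixed incoherent state. Since this whole procedure is a legitimate PIO channel (classical control by incoherent-measurement outcomes being allowed within PIO) and its fidelity with $\Psi_{2^{\lfloor rn\rfloor}}$ tends to $1$, we obtain $C_{d,\PIO}(\rho)\geq r$ for every $r<Q(\rho)$, hence $C_{d,\PIO}(\rho)\geq Q(\rho)$. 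When $\Delta(\rho)$ lacks full support one uses the variant of Lemma~\ref{G rho cliques lemma} from the ensuing Remark, treating the $\{I^\rho_s\}$ as disjoint subsets of $[d]$; the leftover basis vectors carry zero weight and never occur as outcomes. The main obstacle is not conceptual but bookkeeping: concatenating i.i.d.\ pure-state sub-protocols run on a \emph{random} partition of the $n$ copies into a single PIO channel of deterministic output dimension with vanishing total error — i.e., upgrading the ``average yield $=Q(\rho)$'' computation to a genuine achievable rate.
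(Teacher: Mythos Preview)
Your proof is correct and follows essentially the same approach as the paper: both apply the incoherent projective instrument $\{\Pi_{I^\rho_s}\}_s$ to each copy, invoke Lemma~\ref{G rho cliques lemma} to see that the post-measurement states are pure, run the known PIO pure-state distillation on each block, and verify via the block-diagonal entropy decomposition that the resulting rate equals $Q(\rho)$, with the asymptotics made rigorous through the weak law of large numbers. Your treatment is slightly more explicit about the $\SIO\supseteq\PIO$ inclusion and the degenerate case $\Delta(\rho)\not>0$, but otherwise the argument is the same.
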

\endgroup

\begin{proof}
As described in the main text, there is a simple PIO protocol that achieves a rate $Q(\rho)$. This is composed of three steps, that we recall below referring to Lemma~\ref{G rho cliques lemma} for notation.
\begin{enumerate}[(i)]
\item One applies the instrument with Kraus operators $\{\Pi_{I_s}\}_{s\in\mathcal{S}}$ on each of the $n$ copies of $\rho$ that are initially available.
\item In the limit of large $n$, each outcome $s$ is obtained an average number of times equal to $nP(s)$, where $P(s) \coloneqq \Tr [\Pi_{I_s}\rho]$.
\item The post-measurement state corresponding to the outcome $s$, denoted by $\widebar{\rho}_s\coloneqq P(s)^{-1} \Pi_{I_s}\rho \Pi_{I_s}$, is pure by Lemma~\ref{G rho cliques lemma}; it is then known~\cite{yuan_2015} that there is a PIO protocol that extracts coherence bits at a rate $S\left(\Delta(\widebar{\rho}_s)\right)$; since we started with $nP(s)$ states, we obtain $n P(s) S\left(\Delta(\widebar{\rho}_s)\right)$ bits of coherence at the output.
\end{enumerate}
The distillation rate associated with this protocol is then
\begin{align*}
r &= \sum_{s\in\mathcal{S}} P(s)\, S\left(\Delta(\widebar{\rho}_s)\right) \\
&= \sum_{s\in\mathcal{S}} P(s)\, S\left(P(s)^{-1} \Pi_{I_s} \Delta(\rho) \Pi_{I_s}\right) \\
&= S(\Delta(\widebar{\rho})) - S(\widebar{\rho}) \\
&= S(\Delta(\rho)) - S(\widebar{\rho}) \\
&= Q(\rho)\, ,
\end{align*}
as claimed.
This intuitive yet sketchy description of the protocol can be complemented with the following rigorous analysis. Fix an $\epsilon,\delta>0$. By the weak law of large numbers, the number of times the outcome $s$ is obtained in step (ii) will be $N_s\geq n (P(s)-\delta)$ for all $s\in\mathcal{S}$ with probability $P_{n,\delta}$ converging to $1$ as $n\to\infty$. For a fixed $s$, the PIO protocol described in~\cite{yuan_2015} is able to extract from $\widebar{\rho}_s^{\,\otimes N_s}$ a number $\left\lfloor N_s \left(S(\Delta(\widebar{\rho}_s)) - \delta\right)\right\rfloor$ of coherence bits with vanishing error. Using the fact that $S(\Delta(\widebar{\rho}_s))\leq \log d$ and that $P(s)\leq 1$, observe that
\bbb
N_s \left(S(\Delta(\widebar{\rho}_s)) - \delta\right) \geq n (P(s)-\delta ) \left(S(\Delta(\widebar{\rho}_s)) - \delta\right) \geq n \left( P(s) S(\Delta(\widebar{\rho}_s)) - (1+\log d) \delta \right) .
\eee
Up to discarding some of the produced coherence bits, we are thus able to convert $\widebar{\rho}_s^{\,\otimes N_s}$ into a state $\sigma_s$ such that $\left\|\sigma_s - \sigma_s^{\text{ideal}} \right\|_1\leq \epsilon$, where $\sigma_s^{\text{ideal}} \coloneqq \Psi_2^{\otimes n \left( P(s) S(\Delta(\widebar{\rho}_s)) - (1+\log d) \delta \right)}$. With probability $P_{n,\delta}$ approaching one, the outcome of the protocol will then be $\bigotimes_{s\in\mathcal{S}} \sigma_s$. Using a standard telescopic technique together with the fact that the index $s$ can take at most $d$ values, it is not difficult to verify that
\bbb
\left\| \bigotimes_{s\in\mathcal{S}} \sigma_s - \bigotimes_{s\in\mathcal{S}} \sigma_s^{\text{ideal}}\right\|_1 \leq \sum_{s\in\mathcal{S}} \left\| \sigma_s - \sigma_s^{\text{ideal}}\right\|_1 \leq d\epsilon\, .
\eee
Moreover,
\bbb
\bigotimes_{s\in\mathcal{S}} \sigma_s^{\text{ideal}} = \Psi_2^{\otimes n \left( \sum_{s\in\mathcal{S}} P(s) S(\Delta(\widebar{\rho}_s)) - (1+\log d) \delta\right) } = \Psi_2^{\otimes n \left( Q(\rho) - (1+\log d) \delta\right)} \, .
\eee
Since $d$ is fixed and $\epsilon$ is arbitrary, we conclude that $Q(\rho) - (1+\log d)\delta$ is an achievable rate for all $\delta>0$. Taking the supremum in $\delta$, this shows that $C_{d,\PIO}(\rho)\geq Q(\rho)$.
\end{proof}

One could speculate that a better lower bound on $C_{d,\PIO}$ can be obtained by applying the above distillation protocol to many copies of the state $\rho$ simultaneously, which leads to the bound
\bbb
C_{d,\PIO}(\rho)\geq \lim_{n\to\infty} \frac1n Q\left(\rho^{\otimes n}\right) .
\eee
However, it turns out that $Q$ is additive over tensor products, and thus the r.h.s.\ of the above equation coincides with $Q(\rho)$ itself. We conclude our discussion by proving this last property.

\begin{lemma} \label{additivity Q lemma}
For all states $\rho$ and $\sigma$ of any dimension,
\bb
\widebar{\rho\otimes\sigma} = \widebar{\rho}\otimes\,\widebar{\!\sigma}
\label{additivity bar}
\ee
and consequently
\bb
Q\left(\rho\otimes \sigma\right) = Q(\rho)+Q(\sigma)\, .
\label{additivity Q}
\ee
\end{lemma}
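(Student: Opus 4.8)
The plan is to reduce the whole statement to a single combinatorial fact, namely that the edge set of the tensor product factorizes: $(ik,jl)\in E_{\rho\otimes\sigma}$ if and only if $(i,j)\in E_\rho$ and $(k,l)\in E_\sigma$. Once this is established, both displayed identities drop out by inspection. To prove the factorization, I would first invoke (as throughout this section) the reduction to the case $\Delta(\rho)>0$, $\Delta(\sigma)>0$, so that all diagonal entries of $\rho$, $\sigma$, and of $\rho\otimes\sigma$ in the product basis $\{\ket{i}\otimes\ket{k}\}$ are strictly positive. Using $(\rho\otimes\sigma)_{ik,jl}=\rho_{ij}\sigma_{kl}$ and $(\rho\otimes\sigma)_{ik,ik}=\rho_{ii}\sigma_{kk}$, the condition $(ik,jl)\in E_{\rho\otimes\sigma}$ reads
\[
|\rho_{ij}|\,|\sigma_{kl}| \;=\; \sqrt{\rho_{ii}\rho_{jj}}\,\sqrt{\sigma_{kk}\sigma_{ll}}\,.
\]
Positivity of the order-$2$ principal minors gives $|\rho_{ij}|\le\sqrt{\rho_{ii}\rho_{jj}}$ and $|\sigma_{kl}|\le\sqrt{\sigma_{kk}\sigma_{ll}}$, with both right-hand sides strictly positive; the elementary implication ``$ab=a'b'$, $0\le a\le a'$, $0\le b\le b'$, $a',b'>0$ $\Rightarrow$ $a=a'$ and $b=b'$'' then forces each inequality to be saturated separately, which is exactly $(i,j)\in E_\rho$ and $(k,l)\in E_\sigma$. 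The converse is trivial.

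Granting the factorization, Eq.~\eqref{additivity bar} follows directly:
\[
\widebar{\rho\otimes\sigma}=\!\!\sum_{(ik,jl)\in E_{\rho\otimes\sigma}}\!\!(\rho\otimes\sigma)_{ik,jl}\,\ketbraa{ik}{jl}=\Big(\sum_{(i,j)\in E_\rho}\rho_{ij}\ketbraa{i}{j}\Big)\otimes\Big(\sum_{(k,l)\in E_\sigma}\sigma_{kl}\ketbraa{k}{l}\Big)=\widebar{\rho}\otimes\widebar{\!\sigma}\,,
\]
and by Corollary~\ref{Q positive cor} both tensor factors are legitimate density matrices (equivalently, one can note $\mathcal{P}_{\rho\otimes\sigma}=\mathcal{P}_\rho\otimes\mathcal{P}_\sigma$, since the clique partition of Lemma~\ref{G rho cliques lemma} associated with $\rho\otimes\sigma$ is the product partition $\{I_s\times J_t\}$). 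For Eq.~\eqref{additivity Q} I would then combine this with the locality of the dephasing map in the product basis, $\Delta(\rho\otimes\sigma)=\Delta(\rho)\otimes\Delta(\sigma)$, and additivity of the von Neumann entropy over tensor products:
\[
Q(\rho\otimes\sigma)=S\big(\Delta(\rho)\otimes\Delta(\sigma)\big)-S\big(\widebar{\rho}\otimes\widebar{\!\sigma}\big)=\big(S(\Delta(\rho))-S(\widebar{\rho})\big)+\big(S(\Delta(\sigma))-S(\widebar{\!\sigma})\big)=Q(\rho)+Q(\sigma)\,.
\]

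The main obstacle is entirely contained in the edge-set factorization of the first step; everything afterwards is bookkeeping. The one point worth care is ruling out spurious saturations of the product equality, but the strict positivity of all relevant diagonal entries (available after the usual WLOG reduction) makes the ``$ab=a'b'$'' argument close cleanly. If one does not wish to assume $\Delta(\rho)>0$, the same reasoning goes through verbatim upon replacing the partitions by the disjoint families of the Remark following Lemma~\ref{G rho cliques lemma}, since entries sitting on a vanishing diagonal contribute to neither side of the edge-set identity.
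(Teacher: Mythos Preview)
Your argument is correct and follows essentially the same route as the paper: both proofs reduce to the edge-set factorization $(ik,jl)\in E_{\rho\otimes\sigma}\Leftrightarrow (i,j)\in E_\rho$ and $(k,l)\in E_\sigma$, established via positivity of the $2\times 2$ principal minors and strict positivity of the diagonals, and then read off Eq.~\eqref{additivity bar} by direct computation. The only difference is cosmetic: you spell out the final step $Q(\rho\otimes\sigma)=Q(\rho)+Q(\sigma)$ via $\Delta(\rho\otimes\sigma)=\Delta(\rho)\otimes\Delta(\sigma)$ and additivity of entropy, whereas the paper leaves this implicit.
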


\begin{proof}
Observe that our usual assumption that $\Delta(\rho\otimes \sigma)>0$ is equivalent to requiring that both $\Delta(\rho)>0$ and $\Delta(\sigma)>0$. Call $d_1$ the dimension of the space on which $\rho$ acts, and $d_2$ that of the space on which $\sigma$ acts. For all $i,k\in \{1,\ldots, d_1\}$ and $j,l\in \{1,\ldots, d_2\}$, one has
\bbb
\left| \left( \rho\otimes\sigma\right)_{ij,kl}\right| = |\rho_{ik}|\, |\sigma_{jl}| \leq \sqrt{\rho_{ii}\rho_{kk}} \sqrt{\sigma_{jj}\sigma_{ll}} = \sqrt{\left( \rho\otimes\sigma\right)_{ij,ij} \left( \rho\otimes\sigma\right)_{kl,kl}}\, ,
\eee
Hence, the following facts are easily seen to be equivalent: (i) $(ij,kl)\in E_{\rho\otimes \sigma}$; (ii) the above inequality is saturated; (iii) $|\rho_{ik}|=\sqrt{\rho_{ii}\rho_{kk}}$ and $|\sigma_{jl}|=\sqrt{\sigma_{jj}\sigma_{ll}}$; (iv) $(i,k)\in E_\rho$ and $(j,l)\in E_\sigma$. Hence,
\begin{align*}
\widebar{\rho\otimes\sigma} &= \sum_{(ij,kl)\in E_{\rho\otimes \sigma}} \left( \rho\otimes\sigma\right)_{ij,kl} \ketbra{ij}{kl} \\
&= \sum_{(i,k)\in E_{\rho},\, (j,l)\in E_\sigma} \rho_{ik} \sigma_{jl} \ketbra{i}{k}\otimes \ketbra{j}{l} \\
&= \left( \sum_{(i,k)\in E_{\rho}} \rho_{ik} \ketbra{i}{k} \right)\otimes \left( \sum_{(j,l)\in E_\sigma} \sigma_{jl} \ketbra{j}{l} \right) \\
&= \widebar{\rho}\otimes\,\widebar{\!\sigma}
\end{align*}
\end{proof}

\end{document}